\documentclass[review]{elsarticle}
\usepackage{lineno,hyperref}
\modulolinenumbers[5]
\usepackage{color,graphicx,amsmath,amssymb,verbatim,mathrsfs,latexsym}
\usepackage{amsmath,amsfonts,amssymb,amscd,amsthm,amsbsy,bbm,epsf,calc,graphicx}
\usepackage{color}
\usepackage{datetime}
\usepackage{comment}
\usepackage{gensymb}
\usepackage{wasysym}
\newtheorem{theo}{Theorem}

\newtheorem{lemma}{Lemma}

\newtheorem{rem}{Remark}
\newcommand{\R}{\mathbb{R}}

\newcommand{\N}{\mathbb{N}}

\newcommand{\eps}{\varepsilon}
\newcommand{\pa}{\partial}
\newcommand{\Div}{\textrm{div}\,}

\newcommand{\T}{\mathbb{T}^3}
\newcommand{\Id}{\int}
\newcommand{\IId}{\iint}
\newcommand{\Aij}{A^{(ij)}}
\newcommand{\tr}{\mbox{tr}}
\let\na\nabla
\newcommand{\cH}{\mathcal{H}}
\newcommand{\cA}{\mathcal{A}}
\newcommand{\cB}{\mathcal{B}}
\newcommand{\bk}[1]{\left\langle #1 \right\rangle}

\def\fin{f_{\textnormal{in}}} 
\def\Fin{F_{\textnormal{in}}} 
\def\Tr{\textnormal{Tr}}

\begin{document}

\begin{frontmatter}

\title{Spectral gap  and exponential convergence to equilibrium for a multi-species Landau system}

\author{Maria Pia Gualdani\fnref{footnoteG}}
\address{Department of Mathematics, George Washington University, 801 22nd Street, NW Washington DC, 20052, USA}
\fntext[footnoteG]{gualdani@gwu.edu}

\author{Nicola Zamponi \fnref{footnoteZ}}
\address{Institute for Analysis and Scientific Computing, Vienna University of  
	Technology, Wiedner Hauptstra\ss e 8--10, 1040 Wien, Austria}
\fntext[footnoteZ]{nicola.zamponi@tuwien.ac.at}








\begin{abstract}
In this paper we prove new constructive coercivity estimates and convergence to equilibrium for a spatially non-homogeneous system of Landau equations with moderately soft potentials.  
We show that the nonlinear collision operator conserves each species' mass, total momentum,  total energy and that the Boltzmann entropy is nonincreasing along solutions of the system. The entropy decay vanishes if and only if the Boltzmann distributions of the single species are Maxwellians with the same momentum and energy.
A linearization of the collision operator is computed, which has the same conservation properties as its nonlinear counterpart.  We show that the linearized system dissipates a quadratic entropy, and prove existence of spectral gap and exponential decay of the solution towards the global equilibrium. As a consequence, convergence of smooth solutions of the nonlinear problem toward the unique global equilibrium is shown, provided the initial data are sufficiently close to the equilibrium. Our proof is based on new spectral gap estimates and uses a strategy similar to 
\cite{DauJueMouZam} based on an hypocoercivity method developed by Mouhot and Neumann in \cite{MouNeu}.

\end{abstract}



\begin{keyword}
\MSC[2010] 35K40, 35K55, 35K65  35B09, 35B35, 35B40
\end{keyword}

\end{frontmatter}

\linenumbers

\section{Introduction}

This manuscript is concerned with the Cauchy problem for a system of spatially non-homogeneous Landau equations  describing collisions in an ideal plasma mixture. The mixture is constituted by $N \ge 2$ species and each species $i=1,\ldots ,N$ has mass $m_i$ and is described by a density function $F_i(x,p,t)$ defined in the phase-space of position and momentum. 
The vector $F:= (F_1, \ldots, F_N)$ is said to be a solution to the multi-species Landau system if each $F_i$ satisfies

\begin{equation}
\label{eq}
 \left\{ \begin{array}{c} 
                         \pa_t F_i + \frac{p}{m_i} \cdot \nabla_x F_i =   \sum_{j=1}^N Q_{ij}(F_i,F_j), \\ 
                         {}\\
                         F(x,p,0) = \Fin(x,p),\\
           \end{array}  \right.
\end{equation}
with $(x,p,t)\in \mathbb{T}^3 \times \mathbb{R}^3  \times \mathbb{R}_+$. The operator  $Q_{ij}$ is the quadratic Landau collision operator defined as 
\begin{align}\label{coll_Oper}
                       Q_{ij}(F_i,F_j): =   \Div_p\int_{\R^3}A^{(ij)}\left[\frac{p}{m_i}-\frac{p'}{m_j}\right](F_j'\na F_i - F_i\na F_j')dp'.
\end{align}
Here we adopt the shortened notation $F\equiv F(x,p,t)$, $F'\equiv F(x,p',t)$. The term $A^{(ij)}\left[z\right] =\{ a^{(ij)}_{ks}(z)\}$ denotes a positive and symmetric matrix  
with real-valued entries defined as:

 \begin{equation*}
    A^{(ij)}\left[z\right]  :=C^{(i,j)} \left(\textrm{Id} - \frac{z \otimes z}{|z|^2}\right) \varphi(|z|),\;\;\; z\neq 0, \;\;\; C^{(i,j)} >0,
  \end{equation*}
  which acts as the projection operator onto the space orthogonal to the vector $z$. The function $\varphi(|z|)$ is a scalar valued function determined from the original Boltzmann kernel describing how particles interact. If the interaction strength between particles at a distance $r$ is proportional to ${r^{1-s}}$, then 
  \begin{align}\label{potential}
    \varphi(|z|) := |z|^{\gamma+2},\quad \gamma = \frac{(s-5)}{(s-1)}.
  \end{align}
  The constant $C^{(i,j)}>0$ is positive  and symmetric in $i,j$, and is proportional to the reduced mass of the system $ m_i m_j/(m_i + m_j)$.  We refer to \cite[Chapter 4]{LifPit} for a more accurate derivation and discussion of (\ref{eq}). The original Landau system with Coulomb interactions correspond to $\gamma =-3$. \\
  
 The purpose of this paper is to study the spectral gap properties of the linearized operator and to show exponential convergence towards the equilibrium as time grows. We assume throughout this manuscript that $\gamma \in [-2,1]$.  \\

  Let us summarize briefly the state of the art concerning the Cauchy problem (\ref{eq}) for the mono-species case.
  
   In the homogeneous setting, the cases of Maxwell molecules $\gamma =0$ and hard potentials $\gamma\in (0,1]$ have been well understood: existence and uniqueness of smooth regular solution and convergence towards the unique equilibrium state have been analyzed in several papers, see \cite{C14, CLX08, CLX10, DV00_I, DV00_II, MPX13, V98_Max} . For the spatially non-homogeneous case we refer to Alexandre and Villani \cite{AV04} for existence of renormalized solutions, to Desvillette-Villani \cite{DevVil05} for {\em conditional almost exponential} convergence towards equilibrium and to a recent work by Carrapatoso, Tristani and Wu \cite{CTW} for exponential decay towards equilibrium when initial data are close enough to equilibrium.

  The case  of soft potentials  has been proven to be harder.  For moderately soft-potentials $\gamma \in [-2,0)$ existence and uniqueness of spatially homogeneous solutions have been proven by Fournier and Guerin \cite{FG09} and by Guerin \cite{G03} using a probabilistic approach, as well as by Wu \cite{Wu14} and by Alexandre, Liao and Lin \cite{ALL15}.  Carrapatoso, Tristani and Wu \cite{CTW} recently showed exponential decay estimates for the linearized semigroup and  constructed solutions in a close-to-equilibrium regime to the non-linear inhomogeneous equation. The proof in \cite{CTW} is based on an abstract method developed by the first author and collaborators in \cite{GMM13}. 
  
  Global well-posedness theory is still missing for the Coulomb case $\gamma =-3$. For the homogeneous setting, Arsenev-Peskov \cite{AP77} showed existence of weak solutions, uniqueness was later proved by Fournier \cite{F10}. Villani \cite{V98} proved existence of a new class of solutions, the so called H-solutions, which are defined via the $L^1$- bound in time of the entropy production. Recently Alexander, Liao and Lin \cite{ALL15} gave a proof of existence of weak solutions in weighted $L^2$-space under smallness assumption on initial data.  Desvillettes \cite{D15} showed that the $H$-solutions are indeed weak-solutions since they belong to some weighted $L^1_tL^p(\R^3)$-space and Carrapatoso, Desvillettes and He \cite{CDH} have proved time convergence to the associate equilibrium at some explicitly computable rate.  For the inhomogeneous setting, Guo \cite{Guo} and Strain, Guo \cite{SG06, SG08} developed an existence and convergence towards equilibrium  theory based on energy methods for initial data close in some Sobolev norm to the equilibrium state. Recently the set of initial data for which this theory is valid has been improved by Carrapatoso and Mischler \cite{CM} via a linearization method.

Recently the first author and Guillen have shown, for the Coulomb case, global in time existence of classical solution for a modified {\em isotropic} homogeneous Landau equation 
$$
\partial_t F = \textrm{div} ( a[F] \nabla F - F \nabla a[F]),
$$
in the case of radially symmetric (but no smallness assumptions!) initial data  \cite{GG14}. Moreover, using the theory of $A_p$ weights, they showed that solutions to the original Landau equations with general initial data for $\gamma >-2$ have an instantaneous regularization which does not deteriorate as time increases, with bounds that only depend on the physical quantities, mass, momentum and energy \cite{GG16}.  \\

We believe that this is the first work that concerns system (\ref{eq}) and its linearized version. The aim of this work is to extend the spectral analysis valid for the mono-species operator to the multi-species operator with different particles' mass.  From a different prospective, the second author and collaborators have recently studied a system of Boltzmann equations for mixtures of mono-atomic particles with same mass in the case of hard and Maxwellian potentials \cite{DauJueMouZam}: the authors show an explicit spectral-gap estimate for the linearized collision operator and prove the exponential decay of the solutions  towards the global equilibrium by generalizing the hypocoercivity method developed by Mouhot and Neumann in \cite{MouNeu} for the mono-species case to the multi-species case.

\subsection{Main results}

 The main goal of this paper is to give a constructive proof of exponential
decay rate for solutions to the linear system 
\begin{equation}\label{eq.lin}
	     \left \{ \begin{array}{rl}
	      \pa_t f_i + \frac{p}{m_i}\cdot\nabla_x f_i & = \sum_{j=1}^N L_{i,j}(f), \quad  i=1,\ldots,N,\\
 f(x,p,0) &= \fin(x,p),
	     \end{array}\right.	
	\end{equation}
with 
\begin{align}
 \label{L.1_I}
 L_{i,j}(f_i,f_j) :=&\frac{1}{\sqrt{M_{i}}} \left(  Q_{ij}( \sqrt{M_i}f_{i}, M_j) +   Q_{ij}(M_i,  \sqrt{M_j}f_{j})\right) \nonumber\\
 =&\frac{1}{\sqrt{M_{i}}}\Div_p\Id\sqrt{M_i M_j'}A^{(ij)}\left[\frac{p}{m_i}-\frac{p'}{m_j}\right]\cdot \\
 &\qquad \cdot \Big(\sqrt{M_j'}\na f_i - \sqrt{M_i}\na f_j' - f_i\na\sqrt{M_j'} + f_j'\na\sqrt{M_i} \Big) dp' ,\nonumber
\end{align}
obtained from (\ref{eq}) via the perturbative expansion $ F_i = M_i + \sqrt{M_i}f_{i}$, with $M_i$ the Maxwellian equilibrium of the $i^{th}$ species 
$$ M_{i}(p) := \frac{\rho_{i}}{(2\pi m_{i}k_{B}T)^{3/2}}e^{-\frac{1}{2}\frac{|p|^{2}}{ m_{i}k_{B}T }} ,$$
where $k_B$ denotes the Boltzmann's constant and $T$ the temperature. The explicit computations of the linearization $L_{i,j}$ are outlined before Theorem \ref{theorem:linear_conservation}.  


We will  show that any solution to (\ref{L.1_I}) converges
exponentially fast to the global equilibrium.  The rate of decay is computed explicitly, following an approach already used by the second author and collaborators in 
\cite{DauJueMouZam}, which is based upon an abstract method by Mouhot and Neumann \cite{MouNeu}. 

The starting point is the existence of spectral gap for the mono-species linearized collision operator. By exploiting the symmetry properties
of the operator we are able to bound the cross terms by relating  them with the differences of momentum and energy. Hence a spectral gap
for the multi-species linearized operator follows. The hypocoercivity method by Mouhot and Neumann \cite{MouNeu} yields convergence to global equilibrium for the
solution to the in-homogeneous linearized system.  \\




Define with $L := (L_1, L_2, ..., L_N)$ the vector with components $L_i =\sum_{j=1}^N L_{i,j}$ with $L_{i,j}$ as in (\ref{L.1_I}), and by $T:= (T_1, T_2, ..., T_N)$ the transport operator, $T_i f = \frac{p}{m_i}\cdot\nabla_x f_i$.  We also denote by $\Gamma_{i}(f_i,f_j)$ the quadratic nonlinear term
\begin{align}
 \label{Gamma.1_I}
\Gamma_i(f,f) =  \frac{1}{\sqrt{M_{i}}}  \sum_{j=1}^N Q_{ij}( \sqrt{M_i}f_{i},  \sqrt{M_j}f_{j}).
 \end{align}

Let $\cH$ be the space of all functions $f= (f_1, f_2, ...,f_N) $ such that the following norm is finite: 
\begin{align*}
 \|f\|_\cH^2 := \sum_{i=1}^N 
\; & \|\bk{p}^{\gamma/2}P\na f_i\|_{L^2(\R^3, dp)}^2 + \|\bk{p}^{(\gamma+2)/2}(\mathbb{I}-P)\na f_i\|_{L^2(\R^3, dp)}^2 +\\
 &+ \|\bk{p}^{(\gamma+2)/2} f_i\|_{L^2(\R^3, dp)}^2,
\end{align*}
where $\bk{p}:= \sqrt{1+|p|^2}$ and $P := \frac{p\otimes p}{|p|^2}$. We denote by $L^2(\R^3, dp)$ all square integrable functions in the $p$-variable and with an abuse of notation we say that $f = (f_1, f_2, ...,f_N) \in L^2(\R^3, dp)$ if 
$$
\|f\|^2_{L^2(\R^3, dp)}= \sum_{i=1}^N \|f_i\|_{L^2(\R^3, dp)}^2<+\infty.
$$

Note that $\cH$ is a Hilbert space which embeds continuously into $L^2(\R^3, dp)$.\\

Our main results are summarized below.
\begin{theo}\label{thr.spec.gap}
 There exists an explicitly computable constant $\lambda>0$ such that:
\begin{align*}
 -(f,L f)_{L^2(\R^3, dp)} \geq \lambda \|f - \Pi^L f\|_{\cH}^2,\qquad f\in D(L),
\end{align*}
where $\Pi^L$ is the projection operator on the kernel $N(L)$ of $L$.
\end{theo}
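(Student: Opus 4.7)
The plan is to reduce to the known mono-species Landau spectral gap and then handle the residual ``relative modes'' through the cross-species part of the dissipation, in the same spirit as the Boltzmann analysis of \cite{DauJueMouZam}. The starting point is integration by parts combined with the identity $A^{(ij)}[z]z=0$: setting $\phi_i:=f_i/\sqrt{M_i}$, one finds
\begin{align*}
-(f,Lf)_{L^2}=\frac12\sum_{i,j=1}^N\iint_{\R^3\times\R^3}M_iM_j'\,\Xi_{ij}\cdot A^{(ij)}\!\left[\tfrac{p}{m_i}-\tfrac{p'}{m_j}\right]\Xi_{ij}\,dp\,dp',
\end{align*}
with $\Xi_{ij}:=\na_p\phi_i-\na_{p'}\phi_j'$. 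This form is manifestly nonnegative, and the pointwise condition that $\Xi_{ij}$ be parallel to $\tfrac{p}{m_i}-\tfrac{p'}{m_j}$ for every $i,j$ pins down $N(L)$ as the $(N+4)$-dimensional space $\{\phi_i=a_i+b\cdot p+c\,|p|^2/m_i\}$, spanned by the per-species mass modes $\sqrt{M_i}\hat e_i$, the three total-momentum modes $(p_k\sqrt{M_i})_i$, and the single total-energy mode $(|p|^2\sqrt{M_i}/m_i)_i$.

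Extracting the diagonal $i=j$ contribution of the Dirichlet form recovers $-(f_i,L_{ii}f_i)_{L^2}$ for the standard linearized single-species Landau operator $L_{ii}$. The known mono-species spectral gap in the anisotropic norm $\cH$ (valid for $\gamma\in[-2,1]$, see \cite{Guo} and the references cited in the introduction) furnishes $\lambda_1>0$ with $-(f_i,L_{ii}f_i)_{L^2}\geq\lambda_1\|(\mathbb I-\Pi_i)f_i\|_\cH^2$, where $\Pi_i$ denotes the $L^2$-projection onto $N(L_{ii})=\mathrm{span}\{\sqrt{M_i},p_k\sqrt{M_i},|p|^2\sqrt{M_i}\}$. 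Dropping the nonnegative off-diagonal contributions in the Dirichlet form then gives the partial coercivity $-(f,Lf)\geq\lambda_1\|(\mathbb I-\Pi)f\|_\cH^2$, with $\Pi:=(\Pi_1,\ldots,\Pi_N)$.

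To upgrade this to a bound on $\|f-\Pi^L f\|_\cH$ I still need to control $\Pi f-\Pi^L f$, which lives in a finite-dimensional subspace of dimension $4(N-1)$ parametrizing the differences of per-species bulk velocities and temperatures. On $f=\Pi f$ the diagonal Dirichlet pieces vanish, while the cross part reduces, via the rotational invariance of $A^{(ij)}$, to a quadratic form $\Phi$ in the relative velocities $u_i-u_j$ and the relative temperatures $T_i-T_j$, whose coefficients are Gaussian moments of $A^{(ij)}\bigl[\tfrac{p}{m_i}-\tfrac{p'}{m_j}\bigr]$ weighted by $M_iM_j'$ and polynomial factors in $p,p'$. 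Strict positivity of $\varphi(|z|)>0$ for $z\neq 0$, together with the projector structure of $A^{(ij)}$, yields positive definiteness of $\Phi$ on the relative subspace, producing some $\mu>0$ such that $-(\Pi f,L\Pi f)\geq\mu\|\Pi f-\Pi^L f\|_\cH^2$ (all norms being equivalent on a finite-dimensional subspace). A Young-type splitting combined with the $\cH\to\cH^*$ continuity of $L$ on $N(L)^\perp$ then absorbs the cross errors $|(f-\Pi f,L\Pi f)|\leq\eta\|(\mathbb I-\Pi)f\|_\cH^2+C_\eta\|\Pi f-\Pi^L f\|_\cH^2$ into the diagonal bound and assembles the two partial estimates into the claimed inequality $-(f,Lf)\geq\lambda\|f-\Pi^L f\|_\cH^2$.

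The main obstacle is the third step: producing a quantitative and explicit lower bound on $\Phi$. This demands careful evaluation of Gaussian moments of $\mathbb I-\hat z\otimes\hat z$ along $z=p/m_i-p'/m_j$, with the positivity ultimately traced to that of $\varphi$ and tracked in such a way that $\lambda$ depends constructively on the masses $m_i$ and the cross constants $C^{(ij)}$. This is the multi-mass Landau analogue of the cross-term spectral analysis carried out in \cite{DauJueMouZam}.
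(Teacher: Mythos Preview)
Your approach is essentially the paper's: reduce to the mono-species gap for $f^\perp=(\mathbb I-\Pi)f$ via the diagonal part $L^m$, then extract control of the relative macroscopic modes $\Pi f-\Pi^L f$ from the cross-species dissipation evaluated on $f^\parallel=\Pi f$. The computation of your quadratic form $\Phi$ is carried out explicitly in the paper (Lemma~\ref{lem.fLbf.2}): a parity argument kills the mixed $(u_i-u_j)(e_i-e_j)$ terms (because $\int M_j'A^{(ij)}[\cdot]\,dp'$ is even in $p$, while $M_j'p'$ is odd), leaving a diagonal form $\sum_{i,j}\bigl((u_i-u_j)\cdot\mathscr{A}^{(ij)}(u_i-u_j)+\mathscr{B}^{(ij)}(e_i-e_j)^2\bigr)$ with manifestly positive coefficients---rather simpler than you anticipate. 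The passage from the velocity/energy differences to $\|\Pi f-\Pi^L f\|_\cH$ is then quoted verbatim from \cite[Lemma~15]{DauJueMouZam}.

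The one place where your write-up is looser than the paper is the final assembly. If you literally expand $-(f,Lf)=-(f^\perp,Lf^\perp)-2(f^\perp,Lf^\parallel)-(f^\parallel,Lf^\parallel)$ and bound the middle term by Young plus $\cH$--$\cH^*$ continuity of $L$, you get
\[
-(f,Lf)\geq(\lambda_1-\eta)\|f^\perp\|_\cH^2+\Bigl(\mu-\tfrac{C^2}{\eta}\Bigr)\|\Pi f-\Pi^L f\|_\cH^2,
\]
and there is no a~priori reason for a window $C^2/\mu<\eta<\lambda_1$ to exist. The paper sidesteps this by first using the sign $-(f,L^bf)\geq 0$ to write, for any $\eta\in(0,1]$,
\[
-(f,Lf)\geq -(f,L^mf)-\eta(f,L^bf),
\]
and only \emph{then} performing the $f^\parallel/f^\perp$ splitting inside $-(f,L^bf)$ (its inequality~\eqref{Lb_est}). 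The free small parameter $\eta$ now multiplies \emph{all} cross-species error terms, so they can always be absorbed into $\lambda_m\|f^\perp\|_\cH^2$ regardless of the relative sizes of the constants. Your scheme can be rescued by the equivalent device of writing $-(f,Lf)=\theta\cdot(-(f,Lf))+(1-\theta)\cdot(-(f,Lf))$, applying your first partial coercivity on the $\theta$-piece and the quadratic expansion on the $(1-\theta)$-piece, and then taking $1-\theta$ small; but this mechanism has to be stated explicitly for the argument to close.
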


The starting point of the proof of Theorem~\ref{thr.spec.gap} is a coercivity estimate for the part of the operator $L$ that describes collisions among particles of the same species. Let us denote with $L^{m} \equiv (L_{11},\ldots,L_{NN})$ and with $\Pi^{m}$ the projection operator onto the null space of $L^m$, $N(L^{m})$. Estimates of the form
$$
 C_\gamma\|f - \Pi^{m} f\|_{\cH}^2 \ge  -(f,L^{m} f)_{L^2(\R^3, dp)} \geq \lambda_m\|f - \Pi^{m} f\|_{\cH}^2,\qquad f\in D(L^{m}),
$$
have been proven in \cite{BM05, DL97, Guo, M06, MS07}.  Hence the resolvent of $L^m$ is compact for $\gamma +2 \ge 0$ and there exists a spectral gap in $L^2$ for $\gamma \ge -2$.



The second step in the proof consists in bounding the contribution of $f^{\perp}\equiv f - \Pi^{m}f$ inside the quadratic form $-(f, L^{b}f)_{L^2(\R^3, dp)}$, 
where $L^{b}\equiv L - L^{m}$ describes collisions between particles of different species:
\begin{align*}
& -(f^{\perp}, L^{b}f^{\perp})_{L^2(\R^3, dp)} \leq C_{1} \|f^{\perp}\|_{\cH}^2.
\end{align*}

In the third step, the contribution of $f^{\parallel}\equiv \Pi^{m}f$ inside the quadratic form $-(f, L^{b}f)_{L^{2}(\R^3, dp)}$ is bounded from below by the
differences of momentum $u_{i}-u_{j}$ and differences of energies $e_{i}-e_{j}$:
\begin{align*}
 -(f^\parallel,L^b f^\parallel)_{L^2(\R^3, dp)} &\geq C_2 \sum_{i,j=1}^N\left( |u_i-u_j|^2 + (e_i-e_j)^2 \right),
 \qquad f\in D(L). 
\end{align*}
This result is obtained by exploiting the structure of $N(L^{m})$.

Finally, for the fourth and last step we recall an estimate from \cite{DauJueMouZam}, which relates  $u_{i}-u_{j}$ and $e_{i}-e_{j}$ to the $\cH$ norms of $f - \Pi^L f$ and $f - \Pi^m f$ for each $f\in D(L)$:
 \begin{align}\label{lemma_no_proof}
  \sum_{i,j=1}^N\left( |u_i-u_j|^2 + (e_i-e_j)^2 \right)\geq C_3\left( 
  \|f - \Pi^L f\|_{\cH}^2 - 2\|f - \Pi^m f\|_{\cH}^2 \right).
 \end{align}
Estimate (\ref{lemma_no_proof}) was previously obtained in \cite{DauJueMouZam} for $f$ solution to a Boltzmann system. The proof is based  on a careful analysis of the different structures of $N(L^{m})$ and $N(L)$ for the Boltzmann equation, which is intimately connected to its conservation laws. Since the kernel of the Landau operator has the same structure as its Boltzmann counterpart, we refer to \cite[Lemma 15]{DauJueMouZam} for the proof of (\ref{lemma_no_proof}).


Finally, the non-positivity of $L^{b}$ allows us to write 
\begin{align*}
-(f,Lf)_{L^{2}(\R^3, dp)} &= -(f,L^{m}f)_{L^{2}(\R^3, dp)} -(f,L^{b}f)_{L^{2}(\R^3, dp)}\\ &\geq -(f,L^{m}f)_{L^{2}(\R^3, dp)} -\eta(f,L^{b}f)_{L^{2}(\R^3, dp)}
\end{align*}
for an arbitrary $\eta\in(0,1]$. Putting together
the results obtained in the previous four steps and choosing $\eta$ small enough yield the desired spectral gap, concluding the proof of Theorem \ref{thr.spec.gap}.

\begin{theo}\label{thr.conv}
  Let $f^\infty$ be the global equilibrium of the system (\ref{eq.lin}), that is, $f^\infty = \Pi^{L-T} f = \Pi^{L-T} \fin$ 
where $\Pi^{L-T}$ is the projection operator on the kernel $N(L-T)$ of $L-T$.
There exist explicitly computable constants $\tau>0$, $C>0$ such that:
 \begin{align}
  \|f - f^\infty\|_{H^1(\T \times\R^3)}\leq C \;e^{-t/\tau},\qquad t>0. \label{conv}\\\nonumber
 \end{align}

 Let  $\mathcal{M}(p) $ be the equilibrium state to (\ref{eq}) uniquely determined by the mass, first and second momentum of the initial data. 
 Assume there exists an $\varepsilon >0$ such that 
 \begin{align*}
 \left \|\frac{1}{\sqrt{\mathcal{M}}}(\Fin - \mathcal{M} ) \right\|_{H^{k}(\T \times\R^3 )} \le  \varepsilon,
 \end{align*}
 {with $k\geq 4$ }
then the nonlinear problem (\ref{eq}) has an unique solution $F(x,p,t)$ which decays exponentially fast towards the global equilibrium with a constant rate that only depends on the linearized part of the operator :
 \begin{align*}
 \left\|\frac{1}{\sqrt{\mathcal{M}}}(F - \mathcal{M} ) \right\|_{H^{k}(\T \times\R^3 )} \le  C_{\textrm{in}}\; \varepsilon \; e^{-\lambda t/4}, \qquad t>0.
  \end{align*}
The explicit value of $\lambda$ is computed in Theorem \ref{thr.spec.gap}.
\end{theo}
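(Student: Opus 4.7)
The spectral gap of Theorem~\ref{thr.spec.gap} only controls $f - \Pi^L f$, while the macroscopic part $\Pi^L f$ lies in $N(L)$ and cannot be dissipated by $L$ alone; its decay must come from its coupling with the transport operator $T$. This is precisely the Mouhot--Neumann hypocoercivity setting \cite{MouNeu}. The plan is to introduce a modified energy on $H^1(\T \times \R^3)$ of the form
\begin{align*}
 \mathcal{E}[f] \;:=\; \|f\|_{L^2}^2 + A\|\nabla_x f\|_{L^2}^2 + B\|\nabla_p f\|_{L^2}^2 + 2C\,(\nabla_x f, \nabla_p f)_{L^2},
\end{align*}
with $A, B > 0$ large, $C > 0$ small, and $C^2 < AB$, so that $\mathcal{E}[f] \simeq \|f\|_{H^1(\T \times \R^3)}^2$. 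Differentiating along (\ref{eq.lin}), the diagonal pieces give $-(\nabla^\alpha f, L\nabla^\alpha f)_{L^2}$, which Theorem~\ref{thr.spec.gap} bounds below by the $\cH$-dissipation of $(\nabla^\alpha f)^\perp$ (the commutator $[L, \nabla_p]$ drops at most one $\bk{p}$-weight, absorbed by the $\bk{p}^{(\gamma+2)/2}$ factors in $\cH$ since $\gamma \geq -2$). The cross term $\frac{d}{dt}(\nabla_x f, \nabla_p f)_{L^2}$ produces, via $[\nabla_p, T_i] = m_i^{-1}\nabla_x$, a negative definite $-\|\nabla_x \Pi^L f\|_{L^2}^2$ contribution; together with a Poincar\'e inequality on $\T$, this closes $\tfrac{d}{dt}\mathcal{E}[f-f^\infty] \leq -\kappa\,\mathcal{E}[f-f^\infty]$, and Gronwall yields (\ref{conv}).

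\textbf{Part 2 (nonlinear convergence).}
Writing $F_i = M_i + \sqrt{M_i}\,f_i$ gives $\partial_t f + T f = L f + \Gamma(f, f)$ with $\Gamma$ as in (\ref{Gamma.1_I}). The smallness of $\Fin - \mathcal{M}$ means the conserved quantities of $F$ are close to those of $M$, so $\mathcal{M}$ is $O(\varepsilon)$-close to $M$; setting $f^\infty := M^{-1/2}(\mathcal{M} - M) = \Pi^{L-T}\fin$, the strategy is a continuity argument for $f - f^\infty$ in a Sobolev space $X = H^s_{x,p}$ with $s$ large enough to serve as a Banach algebra and to support a trilinear estimate $|(\Gamma(g,h),\psi)_{L^2}| \leq C_\Gamma \|g\|_X \|h\|_\cH \|\psi\|_\cH$. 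Setting $T^\star := \sup\{t \geq 0 : \|f(t)-f^\infty\|_X \leq 2 C_{in}\varepsilon\}$, on $[0, T^\star)$ the energy from Part~1 satisfies
\begin{align*}
 \tfrac{d}{dt}\mathcal{E}[f - f^\infty] \leq -\kappa\,\mathcal{E}[f - f^\infty] + C_\Gamma\,\|f - f^\infty\|_X\,\mathcal{E}[f - f^\infty].
\end{align*}
Choosing $\varepsilon$ so small that $2 C_\Gamma C_{in}\varepsilon \leq \kappa/2$, Gronwall forces $\|f(t) - f^\infty\|_X \leq C_{in}\varepsilon\, e^{-\lambda t}$ on $[0, T^\star)$; a standard continuation argument yields $T^\star = +\infty$, and the claimed $L^2(\mathcal{M}^{-1/2})$ decay follows since $\mathcal{E}[f-f^\infty] \gtrsim \|\mathcal{M}^{-1/2}(F - \mathcal{M})\|_{L^2}^2$.

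\textbf{Main obstacle.}
The delicate step is closing the bilinear estimate for $\Gamma$ in the dissipation norm $\cH$ as $\gamma$ approaches $-2$: the kernel $\varphi(|z|) = |z|^{\gamma+2}$ degenerates at the origin, and the weight gap between the ambient $L^2(\R^3, dp)$ norm and the dissipation norm $\cH$ collapses precisely at $\gamma = -2$. Controlling this requires a careful decomposition of $A^{(ij)}$ into normal and tangential components matched to the $P,\,\mathbb{I}-P$ splitting built into $\cH$, most likely combined with a Carleman-type representation of $Q_{ij}$ as in \cite{Guo, SG08} to absorb the singularity. This borderline behaviour is precisely what dictates the hypothesis $\gamma \geq -2$.
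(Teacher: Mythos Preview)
For Part~1 you correctly invoke the Mouhot--Neumann hypocoercivity framework, and your modified energy with the mixed term $(\nabla_x f,\nabla_p f)$ is exactly their mechanism. The step you gloss over, however --- ``the commutator $[L,\nabla_p]$ drops at most one $\langle p\rangle$-weight, absorbed by \ldots'' --- is precisely where the technical content of the paper's proof lies, and a bare weight-count does not close it. The paper does not estimate $[L,\nabla_p]$ directly; instead it splits $L=K-\Lambda$, with $\Lambda$ the part acting on $f_i$ alone and $K$ the integral part involving $f_j'$, and then verifies the abstract structural hypotheses of \cite{MouNeu}: coercivity of $\Lambda$ and the bound $(\nabla_p\Lambda f,\nabla_p f)\geq\nu_3\|\nabla_p f\|_\cH^2-\nu_4\|f\|_{L^2}^2$ are taken from \cite{Guo}, while the key new ingredient is a lemma showing $K$ is regularizing, $(\nabla_p K f,\nabla_p f)\leq\delta\|\nabla_p f\|_{L^2}^2+C(\delta)\|f\|_{L^2}^2$ for every $\delta>0$. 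That lemma is proved by a small/large relative-velocity cutoff on the kernel of $K$; on the singular region $|p/m_i-p'/m_j|<\eps$ the kernel is $L^1$-small (here $\gamma\geq -2$ is used), and on the complement it is smooth so one can integrate by parts in $p$ to trade $\nabla_p f_i$ for $f_i$. Without this $K$--$\Lambda$ dichotomy your $\nabla_p$-energy estimate does not close.

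For Part~2 the paper takes a simpler and rather different route than you propose. It does not carry the full $H^1$ modified energy or work in an algebra $H^s$; it runs a plain $L^2_{x,p}$ energy estimate on $f=(F-\mathcal{M})/\sqrt{\mathcal{M}}$ (note that the linearization is around $\mathcal{M}$ itself, so $M=\mathcal{M}$ and your $f^\infty$ is just $0$ --- the detour through ``$\mathcal{M}$ is $O(\eps)$-close to $M$'' is unnecessary). Transport drops by skew-symmetry, the spectral gap supplies the negative term, and the nonlinearity is handled by the trilinear bound $(\Gamma(f,f),f)_{L^2_{x,p}}\leq C\|f\|_{L^2_{x,p}}\|f\|_{L^2_x\cH}^2$, which the paper imports directly from Guo \cite[Thr.~3]{Guo} rather than reproving via a Carleman-type decomposition. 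This yields $\tfrac12\partial_t\|f\|_{L^2_{x,p}}^2\leq(-\lambda+C\|f\|_{L^2_{x,p}})\|f\|_{L^2_x\cH}^2$, closed by an elementary iteration for small data. Your proposal --- propagating the hypocoercive functional $\mathcal{E}$ through the nonlinear equation --- would also work and has the advantage of treating the macroscopic part $\Pi^L f$ consistently via the transport coupling, but it is considerably heavier than what the paper actually does, and the ``main obstacle'' you flag (the bilinear estimate at $\gamma=-2$) is outsourced entirely to \cite{Guo}.
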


\begin{rem} {{The global equilibrium states  $\mathcal{M}(p) $ and $f^\infty(p)$ are defined in  Theorem \ref {theorem:non-linear_conservation} and Theorem \ref{theo:semi-def} respectively. }} 

\end{rem}

In order to prove Theorem \ref{thr.conv} we use the method developed in  \cite{MouNeu} which (i) relates coercivity estimates on $L$ to the evolution of the corresponding semigroup in the Sobolev space $H^k(\T \times\R^3)$, and (ii) combines spectral gap estimates for the linearized operator with bounds of the nonlinear terms to obtain asymptotic-in-time estimates for the non-linear problem when initial data are sufficiently close to the equilibrium. We summarize the method in the theorem below: 

\begin{theo}  \cite[Thr.~1.1, Thr.~4.1]{MouNeu} \label{MN_Theorem}
\begin{itemize}
\item 
Let $L$ be a linear operator. Assume there exists a suitable decomposition $L = K - \Lambda$ such that 
\begin{align*}
 (i) \; & \nu_{1}\|f\|_{\cH}^{2}\leq (f,\Lambda f)_{L^{2}(\R^{3},dp)}\leq \nu_{2}\|f\|_{\cH}^{2},\\
 (ii)\; &  (\na_{p}\Lambda f,\na_{p} f)_{L^{2}(\R^{3},dp)}\geq \nu_{3}\|\na_{p}f\|_{\cH}^{2} - \nu_{4}\|f\|_{L^{2}_{p}}^{2},\\
 (iii) \; &  (\na_{p}K f, \na_{p}f)_{L^{2}(\R^{3},dp)}\leq C(\delta)\|f\|_{L^{2}(\R^{3},dp)}^{2} + \delta \|\na_{p}f\|_{L^{2}(\R^{3},dp)}^{2},\\ 
 (iv) \; &  |(f,Lg)_{L^2_p}|\leq C\|f\|_\cH\|g\|_\cH ,\\
 (v) \; & -(f,L f)_{L^2(\R^3, dp)} \geq \lambda \|f - \Pi^L f\|_{\cH}^2.
\end{align*}
Then $\mathcal{L}:= L-v\cdot \nabla_x$ generates a strongly continuous evolution semi-group which satisfies
$$
\|e^{\mathcal{L}t} (\mathbb{I} - \Pi^{\mathcal{L}} )\|_{H^1(\T \times\R^3)}\leq C e^{-t/\tau},
$$
for some explicit constants $C$ and $\tau$ that only depend on the constants appearing in $(i)-(v)$. \\

\item Consider the nonlinear problem 
\begin{align}\label{CM_nonlinear}
\partial_t F + v \cdot \nabla_x F = Q(F,F), \quad F(\cdot, 0 ) =  \Fin(\cdot),
\end{align}
and denote by $F^\infty$ the global equilibrium to (\ref{CM_nonlinear}) uniquely determined by the mass, first and second momentum of the initial data. Let 
\begin{align*}
\Gamma(f,f) + Lf :=  \frac{1}{\sqrt{F^\infty}} Q(F_\infty + f\sqrt{F_\infty},F_\infty +f \sqrt{F_\infty}),
 \end{align*}
with $Lf$ a linear operator satisfying  $(v)$ above, and $\Gamma(f,f)$ such that 
\begin{align*}
(ii')\; &  (D_{x}^{\alpha}D_{p}^{\beta}\Lambda f,D_{x}^{\alpha}D_{p}^{\beta}f)_{{{L^{2}(\T\times\R^{3})}}}\geq 
 \nu_{3}\|D_{x}^{\alpha}D_{p}^{\beta}f\|^2_{{{L^{2}(\T,\cH)}}} - \nu_{4}\|f\|^2_{{{H^{k-1}(\T\times\R^{3})}}},\\
 (iii') \; &  (D_{x}^{\alpha}D_{p}^{\beta}K f, D_{x}^{\alpha}D_{p}^{\beta}f)_{L^{2}(\T\times\R^{3})}\leq C(\delta)\|f\|_{H^{k-1}(\T\times\R^{3})}^{2} 
 + \delta \|D_{x}^{\alpha}D_{p}^{\beta}f\|_{L^{2}(\T\times\R^{3})}^{2},\\
 (vi) \; & \| \Gamma(f,f)\|_{H^k(\T \times\R^3)}\le C \| f\|_{H^k(\T \times\R^3)} \left( \sum_{|l|+|j|\le k} \| \partial_x^l\partial_ v^j f \|_{L^{2}(\T, \cH)}\right)^{1/2},
 \end{align*}
 for some $k\ge 4$ and $|\alpha| + |\beta|\le k$, $|\beta|\ge 1$. 
 
 Then (\ref{CM_nonlinear}) has an unique smooth solution that decays exponentially fast towards $F_\infty$:
\begin{align*}
 \left\|\frac{1}{\sqrt{{F^\infty}}}(F - {F^\infty} ) \right\|_{H^k(\T \times\R^3 )} \le  C_{\textrm{in}}\; \varepsilon \; e^{-\lambda t/4}, \qquad t>0,
 \end{align*}
 provided the initial data $\Fin$ satisfies
  \begin{align*}
 \left \|\frac{1}{\sqrt{{F^\infty}}}(\Fin - {F^\infty} ) \right\|_{H^k(\T \times\R^3 )} \le  \varepsilon.
 \end{align*}
 
  \end{itemize}

\end{theo}

Conditions $(i)-(iii)$ state that $\Lambda$ is coercive (in some sense) on the space $\cH$, while $K$ has a regularizing property. 
Assumption $(v)$ is exactly the spectral gap proved in Theorem \ref{thr.spec.gap}. For $(vi)$ will use an estimate proved for the mono-species case by Guo in \cite[Thr.~3]{Guo}.\\

An alternative (and perhaps easier) way of proving Theorem \ref{thr.spec.gap} and Theorem \ref {thr.conv} is to show that $K$ is compact and $\Lambda$ is coercive, see \cite[Lemma 10]{DauJueMouZam}. However this method is non-constructive, in the sense that the size of both the spectral gap and rate of convergence will be only given implicitely. For completeness we add the proof of compactness of $K$ in the Appendix. In the following sections we will adopt the procedure outlined earlier that will  allow for constructive estimates.

 \subsection{Outline} The rest of the paper is organized as follows: after brief summary of the conservation properties for the non-linear system, Section \ref{Conserved quantities and linearization} concerns the formulation of the linearized system and its properties. Section \ref{Proof Teorem 1} contains the proof to Theorem \ref{thr.spec.gap}. In Section \ref{Exponential decay to global equilibrium} we present the proof of Theorem \ref{thr.conv}. Exponential decay is proven with an explicit rate. Finally, in the Appendix we prove the compactness of the operator $K$. 

We conclude by mentioning that among the several  open problems, the one about estimates in the case of very soft potentials $\gamma <-2$ is a particularly interesting question.

    \subsection{Notation} Vectors in $\mathbb{R}^3$ will be denoted by $v,v',p,p'$ and so on, the inner product between $v$ and $w$ will be written $(v,w)$. The identity matrix will be noted by $\mathbb{I}$, the trace of a matrix $X$ will be denoted $\Tr(X)$.  The initial condition for the Cauchy problem will always be denoted by $\fin$ and $C_{{in}}$ will be any positive constant that only depends on the initial data. Unless otherwise specified, $\int dp\equiv \int_{\R^{3}}dp$, $\int dx\equiv \int_{\T}dx$. 
    The space $L^2_p$ denotes the classical Lebesgue spaces $L^2(\R^3)$ with respect to the variable $p$. We denote by $H^k_{x,p}$, $k\ge 1$ the Sobolev space $H^k(\T\times \R^3)$ with respect to the variable $x$ and $p$ and by $L^2_x\mathcal{H}$ the space of all functions with finite norm $\| \|\cdot \|_{\mathcal{H}} \|_{L^2(\T)}$.

\subsection{Acknowledgements.} MPG is supported by NSF DMS-1412748 and DMS-1514761. NZ acknowledges support from the Austrian Science Fund (FWF), grants P24304, P27352, and W1245, and the Austrian-French Program of the Austrian Exchange Service (OeAD).
MPG would like to thank NCTS Mathematics Division Taipei for their kind hospitality. The authors would like to thank Francesco Salvarani for the fruitful discussions. The authors would also like to thank for their kind hospitality the Mathematics Department at Royal Institute of Technology KTH, Sweden.

\section{Conserved quantities and linearization} \label{Conserved quantities and linearization}
In this section we first outline the conservation laws and entropy decay property which hold for (\ref{eq}). Then we present a linearization of   (\ref{eq}) around an equilibrium state and show that the new linear system  also satisfies conservation of mass, total momentum and total energy. 

\begin{theo}\label{theorem:non-linear_conservation}
Let $F_i$, $i = 1,...,N$ be a solution to (\ref{eq})-(\ref{coll_Oper}).
The mass,  the total momentum and energy of the system are conserved over time, i.e. 
\begin{align*}
\frac{d}{dt}\Id\Id    F_i \; dpdx= \frac{d}{dt}  \sum_{i=1}^N \Id \Id  p F_i \; dpdx  =\frac{d}{dt}  \sum_{i=1}^N \Id\Id    \frac{|p|^2}{2 m_{i}} F_i \; dpdx= 0.
\end{align*}
 In addition the Boltzmann entropy functional $H(F_1, F_2, ..., F_N)$ defined as 
\begin{align*}
 H(F_1, F_2, ..., F_N) := \Id\sum_{i=1}^N F_i\log \frac{F_i}{m_{i}^{3}} dp
\end{align*}
decreases along solutions to \eqref{eq}, and it is constant (that is, the entropy production vanishes) if and only if the distribution functions $(F_1,\ldots,F_N)$ are Maxwellians $(\mathcal{M}_1,\ldots,\mathcal{M}_N)$ of the form: 
\begin{align*}
\mathcal{M}_i(x,p) = \frac{{\rho_i}(x)}{\left( 2\pi m_i k_{B}T(x)\right)^{3/2}}e^{-\frac{\left|p-m_i { u(x)}\right|^2}{{2 m_i k_{B} T(x)}}}.
\end{align*}
The density $\rho_i(x)$, velocity $u(x)$ and temperature $T(x)$ are uniquely determined by the conservation properties: 
 $$
 T(x) = \frac{1}{ \sum_1^N \rho_i }  \sum_{i=1}^N\int  \frac{|p - m_{i}u|^2}{3 m_{i} k_{B}} F_i \;dp,  
 \quad u(x) = \frac{1}{ \sum_1^N \rho_i m_{i} }  \sum_{i=1}^N\int p F_i \;dp,
 \quad \rho_i(x) =  \int F_i dp .
 $$
The only local equilibrium that satisfies (\ref{eq})-(\ref{coll_Oper}) is the global equilibrium 
\begin{align*}
\mathcal{M}_i(p) = \frac{{\bar{\rho_i}}}{\left( 2\pi m_i k_{B} T_\infty\right)^{3/2}}e^{-\frac{\left|p-m_i { u_\infty}\right|^2}{{2 m_i k_{B} T_\infty}}},
\end{align*}
 with $\bar{\rho_i}$, $T_\infty$ and $u_\infty$ constants uniquely determined by the conservation properties: 
 $$
 T_\infty = \frac{1}{ \sum_1^N \bar{\rho_i} }  \sum_{i=1}^N\int \int  \frac{|p - m_{i}u|^2}{3m_{i}k_{B}}  F_i \;dpdx,  \;
  u_\infty = \frac{1}{ \sum_1^N  \bar{\rho_i}  m_i }  \sum_{i=1}^N\int \int pF_i \;dpdx,  \;\bar{\rho_i} = \int \int F_i dpdx.
 $$
\end{theo}


\begin{proof}
The mass conservation
follows immediately from the divergence structure of the collision operators.
 We first show total momentum conservation. Integration by parts yields:
\begin{align*}
 &\Id p Q_{ij}(f_i,f_j) dp = -\IId A^{(ij)}\left[\frac{p}{m_i}-\frac{p'}{m_j}\right](f_j'\na f_i - f_i\na f_j')dp dp' \\
 &\quad = \IId f_i f_j' (\Div_p A^{(ij)}\left[\frac{p}{m_i}-\frac{p'}{m_j}\right] - \Div_{p'}A^{(ij)}\left[\frac{p}{m_i}-\frac{p'}{m_j}\right])dp dp'\\
 &\quad = \left( \frac{1}{m_{i}} + \frac{1}{m_{j}} \right)\IId f_i f_j' (\Div_w\Aij[w])\vert_{w=\frac{p}{m_i}-\frac{p'}{m_j}}dp dp' =:   I_{ij}.
\end{align*}
Applying the transformation $p\leftrightarrow p'$ inside $I_{ij}$ and noticing that $w\in\R^{3}\mapsto\Div_{w}A^{(ij)}[w]$ is an odd function, we find that $I_{ij}$ is skew-symmetric: $I_{ij} = -I_{ji}$.\\
Hence, summing up the above equality w.r.t. $i,j=1,\ldots,N$ we get
\begin{align*}
 &\sum_{i,j=1}^N \Id p  Q_{ij}(f_i,f_j) dp =  \sum_{i,j=1}^N I_{ij} =0,
\end{align*}
due to the skew-symmetry of $I_{ij}$. 


Similarly, for the conservation of the total energy, integration by parts yields:
\begin{align*}
 \nonumber
 &\Id \frac{|p|^2}{2} Q_{ij}(f_i,f_j) dp = -\IId p\cdot A^{(ij)}\left[\frac{p}{m_i}-\frac{p'}{m_j}\right](f_j'\na f_i - f_i\na f_j')dp dp'\\
 & = \IId f_i f_j' (\Div_p(A^{(ij)}\left[\frac{p}{m_i}-\frac{p'}{m_j}\right]p) - \Div_{p'}(A^{(ij)}\left[\frac{p}{m_i}-\frac{p'}{m_j}\right]p))dp dp'\\
 & = \IId f_i f_j'\tr(A^{(ij)}\left[\frac{p}{m_i}-\frac{p'}{m_j}\right])dp dp'+\\
 &\qquad + \IId f_i f_j' p\cdot( \Div_p A^{(ij)}\left[\frac{p}{m_i}-\frac{p'}{m_j}\right] - \Div_{p'}A^{(ij)}\left[\frac{p}{m_i}-\frac{p'}{m_j}\right] )dp dp'\\
 & = \IId f_i f_j'\tr(A^{(ij)}\left[\frac{p}{m_i}-\frac{p'}{m_j}\right])dp dp'+\\
 &\qquad + \left(\frac{1}{m_{i}}+\frac{1}{m_{j}}\right)\IId f_i f_j' p\cdot(\Div_w\Aij[w])\vert_{w=\frac{p}{m_{i}}-\frac{p'}{m_{j}}}dp dp'.
\end{align*}
We briefly recall here what we mean when we write $ \Div_w\Aij[w]$. Let $M$ be a $N\times N$ matrix with elements $m_{i,j}$: $\Div_x M$ is a vector with components $b_{i}:=\sum_{j=1}^N\partial_{x_j}m_{i,j}$. Hence 
$$
\Div_z A^{(ij)}[z]  = -2C^{(i,j)}|z|^{\gamma}z.
$$
We denote by $ \Div_x M$ the vector $b$ with components $b_{i}:=\sum_{j=1}^N\partial_{x_j}m_{i,j}$. 
It follows:
\begin{align}
\label{intpQ}
\sum_{i,j=1}^{N}\frac{1}{m_{i}}\Id \frac{|p|^2}{2} Q_{ij}(f_i,f_j) dp = \sum_{i,j=1}^{N}\frac{1}{m_{i}}\IId f_i f_j'\tr(A^{(ij)}\left[\frac{p}{m_i}-\frac{p'}{m_j}\right])dp dp'& \\
+ \sum_{i,j=1}^{N}\left(\frac{1}{m_{i}}+\frac{1}{m_{j}}\right)\IId f_i f_j' \frac{p}{m_{i}}\cdot(\Div_w\Aij[w])\vert_{w=\frac{p}{m_{i}}-\frac{p'}{m_{j}}}&dp dp' .
\nonumber
\end{align}
By applying the transformation $(p,i)\leftrightarrow (p',j)$ in the terms on the right-hand side of \eqref{intpQ} we deduce:
\begin{align*}
& \sum_{i,j=1}^{N}\frac{1}{m_{i}}\Id \frac{|p|^2}{2} Q_{ij}(f_i,f_j) dp
= \frac{1}{2}\sum_{i,j=1}^{N}\left(\frac{1}{m_{i}}+\frac{1}{m_{j}}\right)\IId f_i f_j'\tr(A^{(ij)}\left[\frac{p}{m_i}-\frac{p'}{m_j}\right])dp dp' + \\
&\qquad + \frac{1}{2}\sum_{i,j=1}^{N}\left(\frac{1}{m_{i}}+\frac{1}{m_{j}}\right)\IId f_i f_j' 
(w\cdot\Div_w\Aij[w])\vert_{w=\frac{p}{m_{i}}-\frac{p'}{m_{j}}}dp dp' =0,\nonumber
\end{align*}
since $w\cdot\Div_w\Aij[w] = -\tr\Aij[w]$ for $w\in\R^{3}$. The total energy conservation follows.

Finally, we show that the entropy functional $H$ is decreasing as time increases: 
\begin{align*}
 -\frac{d }{dt}H(f_1, f_2, ...,f_N) 
 &= -\sum_{i,j=1}^N\Id (\log f_i + 1)Q_{ij}(f_i,f_j)dp \\
 \nonumber
 & = \sum_{i,j=1}^N\IId \frac{\na f_i}{f_i}\cdot A^{(ij)}\left[\frac{p}{m_i}-\frac{p'}{m_j}\right](f_j'\na f_i - f_i\na f_j')dp dp'\\
 \nonumber
 & = \sum_{i,j=1}^N\IId f_i f_j' \frac{\na f_i}{f_i}\cdot A^{(ij)}\left[\frac{p}{m_i}-\frac{p'}{m_j}\right]\left( \frac{\na f_i}{f_i} - \frac{\na f_j'}{f_j'} \right)dp dp'.
\end{align*}
By exchanging $i\leftrightarrow j$ and $p\leftrightarrow p'$ we obtain:
\begin{align*}
  -\frac{d }{dt}H 
 &=\sum_{i,j=1}^N\IId f_i f_j' \frac{\na f_i}{f_i}\cdot A^{(ij)}\left[\frac{p}{m_i}-\frac{p'}{m_j}\right]\left( \frac{\na f_i}{f_i} - \frac{\na f_j'}{f_j'} \right)dp dp'\\
 \nonumber
 &=-\sum_{i,j=1}^N\IId f_i f_j' \frac{\na f_j'}{f_j'}\cdot A^{(ij)}\left[\frac{p}{m_i}-\frac{p'}{m_j}\right]\left( \frac{\na f_i}{f_i} - \frac{\na f_j'}{f_j'} \right)dp dp'\\
 \nonumber
 &=\frac{1}{2}\sum_{i,j=1}^N\IId f_i f_j' \left(\frac{\na f_i}{f_i}-\frac{\na f_j'}{f_j'}\right)
 \cdot A^{(ij)}\left[\frac{p}{m_i}-\frac{p'}{m_j}\right]\left( \frac{\na f_i}{f_i} - \frac{\na f_j'}{f_j'} \right)dp dp'\geq 0,
\end{align*}
since $A^{(ij)}$ is a positive definite matrix. 

 Hence, $\frac{d }{dt}H=0$ if and only if 
$\frac{\na f_i}{f_i}-\frac{\na f_j'}{f_j'}$ lies in the kernel of $A^{(ij)}\left[\frac{p}{m_i}-\frac{p'}{m_j}\right]$, that is, if and only if there exists a scalar function $\lambda_{ij}[v,v']: \R^3 \times \R^3 \to \R$ such that 
\begin{align}\label{stat}
 \frac{\na f_i}{f_i}-\frac{\na f_j'}{f_j'} = \lambda_{ij}\left[\frac{p}{m_{i}},\frac{p'}{m_{j}}\right]\left(\frac{p}{m_{i}}-\frac{p'}{m_{j}}\right).
\end{align}
We next show that the matrix $\{ \lambda_{ij}[\frac{p}{m_{i}},\frac{p}{m_{i}}]\}_{i,j}$ is constant for all $i$ and $j$.  Applying the transformation $(p,i)\leftrightarrow(p',j)$ in \eqref{stat} we get 
$$\lambda_{ij}\left[\frac{p}{m_{i}},\frac{p'}{m_{j}}\right] = \lambda_{ji}\left[\frac{p'}{m_{j}},\frac{p}{m_{i}}\right],$$
 which implies
\begin{align*}
& \lambda_{ij}\left[\frac{p}{m_{i}},\frac{p}{m_{i}}\right] = \lambda_{ji}\left[\frac{p}{m_{i}},\frac{p}{m_{i}}\right]. 
\end{align*}
We differentiate \eqref{stat} w.r.t. $p$ and obtain:
\begin{align*}
 D^2\log f_i(p) = \na_p\lambda_{ij}\left[\frac{p}{m_{i}},\frac{p'}{m_{j}}\right]\otimes\left(\frac{p}{m_{i}}-\frac{p'}{m_{j}}\right) 
 + \frac{1}{m_{i}}\lambda_{ij}\left[\frac{p}{m_{i}},\frac{p'}{m_{j}}\right] {\mathbb{I}}.
\end{align*}
Consequently for $p'/m_{j}=p/m_{i}$,
\begin{align}\label{stat.2}
\pa_{p_k p_s}^2\log f_i(p) = \frac{1}{m_{i}}\lambda_{ij}\left[\frac{p}{m_{i}},\frac{p}{m_{i}}\right]\delta_{ks},\qquad k,s=1,2,3.
\end{align}
Differentiation of \eqref{stat.2} leads to:
\begin{align*}
\pa_{p_\ell}\pa_{p_k p_s}^2\log f_i(p) =\frac{1}{m_{i}}\pa_{p_\ell} \lambda_{ij}\left[\frac{p}{m_{i}},\frac{p}{m_{i}}\right]\delta_{ks},\qquad k,s,\ell=1,2,3.
\end{align*}
Since the order of the derivatives on the left hand side is interchangeable (assuming enough smoothness for $f_i$), one deduces that
\begin{align*}
\pa_{p_\ell}\lambda_{ij}\left[\frac{p}{m_{i}},\frac{p}{m_{i}}\right]\delta_{ks} 
= \pa_{p_k}\lambda_{ij}\left[\frac{p}{m_{i}},\frac{p}{m_{i}}\right]\delta_{\ell s} ,
\qquad k,s,\ell = 1,2,3,
\end{align*}
which is consistent if and only if $v\in\R^{3}\mapsto\lambda_{ij}[v,v]$ is constant.
 
Moreover, \eqref{stat.2} implies that, for $i=1,\ldots,N$, $\lambda_{ij}$ does not depend on $j$. Summarizing, we have found that 
$\lambda_{i,j}[v,v]$ is constant, symmetric in $i,j$ and does not depend on $j$. 
Hence $\lambda_{i,j}[v,v] \equiv -\alpha^{(2)}$, $ \alpha^{(2)}\in \R$, for $i,j=1,\ldots,N$, $v\in\R^{3}$.
This fact and \eqref{stat.2} imply that $\log f_i(p)$ is a second order polynomial in $p$:
\begin{align}
 \log f_i(p) = \alpha_i^{(0)} + \alpha_i^{(1)}\cdot p - \alpha^{(2)} \frac{|p|^2}{2 m_{i}} , \qquad i=1,\ldots,N.\label{stat.3}
\end{align}
From \eqref{stat} and \eqref{stat.3} it follows:
\begin{align*}
\alpha_i^{(1)} - \alpha_j^{(1)} - \alpha^{(2)} \left(\frac{p}{m_{i}}  - \frac{p'}{m_{j}} \right)
= -\alpha^{(2)}\left(\frac{p}{m_{i}}-\frac{p'}{m_{j}}\right),
\end{align*}
which leads to $\alpha_i^{(1)} = \alpha_j^{(1)}$, $i,j=1,\ldots,N$ after evaluation for $p'/m_{j}=p/m_{i}$. 
We conclude that 
\begin{align*}
& \log f_i(p) = \alpha_i^{(0)} + \alpha^{(1)}\cdot p - \alpha^{(2)}\frac{|p|^2}{2m_{i}},\qquad \quad  ~�~�i=1,\ldots,N.
\end{align*}
Conservation of mass, momentum and energy uniquely determine the constants $\alpha_i^{(0)} $, $\alpha^{(1)}$ and $\alpha^{(2)}$.

\end{proof}

\subsection*{Linearization around the equilibrium.} 

We now linearize the collision operator $Q$ around the Maxwellians $(M_1,\ldots,M_N)$ defined as 
$$ M_{i}(p) = \frac{\rho_{i}}{(2\pi m_{i}k_{B}T)^{3/2}}e^{-\frac{1}{2}\frac{|p|^{2}}{ m_{i}k_{B}T }} .$$
 It holds:
\begin{align*}
 \sum_{j=1}^N Q_{ij}(M_i+\sqrt{M_i}f_i,M_j+\sqrt{M_j}f_j) =& \sum_{j=1}^N  Q_{ij}(M_i,\sqrt{M_j}f_j) + Q_{ij}(\sqrt{M_i}f_i,M_j) \;+\\
 & + \;Q_{ij}(\sqrt{M_i}f_i,\sqrt{M_j}f_j), 
\end{align*}
taking into account that $Q_{i,j}(M_i, M_j)=0$. 
Let us first compute:
\begin{align*}
  Q_{ij}(\sqrt{M_i}f_i,M_j)& = \Div_p\Id A^{(ij)}\left[\frac{p}{m_i}-\frac{p'}{m_j}\right](M_j'\na(\sqrt{M_i}f_i) - \sqrt{M_i}f_i\na M_j')dp'\\
 &= \Div_p\Id A^{(ij)}\left[\frac{p}{m_i}-\frac{p'}{m_j}\right]\left((M_j'\na\sqrt{M_i} - \sqrt{M_i}\na M_j')f_i + M_j'\sqrt{M_i}\na f_i \right)dp'.
\end{align*}
Rewriting 
\begin{align}\label{computations}
 M_j'\na\sqrt{M_i} - \sqrt{M_i}\na M_j' 
&= \sqrt{M_i}M_j'\left( -\frac{1}{2}\na\log M_j' -\frac{1}{2 k_{B}T}\left(\frac{p}{m_i}-\frac{p'}{m_j}\right) \right) \nonumber\\
&= -\sqrt{M_i M_j'}\na \sqrt{M_j'} -\frac{\sqrt{M_i}M_j'}{2 k_{B}T}\left(\frac{p}{m_i}-\frac{p'}{m_j}\right),
\end{align}
it follows:
\begin{align}
 Q_{ij}(\sqrt{M_i}f_i,M_j) &= \Div_p\Id\sqrt{M_i M_j'}A^{(ij)}\left[\frac{p}{m_i}-\frac{p'}{m_j}\right]\left( \sqrt{M_j'}\na f_i - f_i\na\sqrt{M_j'} \right) dp' , \label{Lin.1}
\end{align}
since $A^{(ij)}\left[\frac{p}{m_i}-\frac{p'}{m_j}\right](\frac{p}{m_i}-\frac{p'}{m_j})\equiv 0$.
We now consider
\begin{align*}
 Q_{ij}(M_i,\sqrt{M_j}f_j) &= \Div_p\Id A^{(ij)}\left[\frac{p}{m_i}-\frac{p'}{m_j}\right]\left( \sqrt{M_j'}f_j'\na M_i - M_i\na\left(\sqrt{M_j'}f_j'\right) \right) dp'\\
 &= \Div_p\Id A^{(ij)}\left[\frac{p}{m_i}-\frac{p'}{m_j}\right]\left( f_j'\left(\sqrt{M_j'}\na M_i - M_i\na\sqrt{M_j'}\right) - M_i\sqrt{M_j'}\na f_j' \right)dp'.
\end{align*}
Using similar calculations as in (\ref {computations}) one gets
$$
\sqrt{M_j'}\na M_i - M_i\na\sqrt{M_j'} = \sqrt{M_i M_j'}\na\sqrt{M_i} + \frac{M_i\sqrt{M_j'}}{2 k_{B}T}\left(\frac{p}{m_i}-\frac{p'}{m_j}\right),
$$
which implies
\begin{align}
 Q_{ij}(M_i,\sqrt{M_j}f_j) = \Div_p\Id\sqrt{M_i M_j'}A^{(ij)}\left[\frac{p}{m_i}-\frac{p'}{m_j}\right]\left( f_j'\na\sqrt{M_i} - \sqrt{M_i}\na f_j' \right)dp'.\label{Lin.2}
\end{align}
Adding \eqref{Lin.1} with \eqref{Lin.2} (and dividing by $\sqrt{M_{i}}$) we obtain the linearized collision operator:
\begin{align*}
 L_i(f_1,\ldots,f_n) &= \sum_{j=1}^N L_{ij}(f_i,f_j),
 \end{align*}
 with
 \begin{align}
 \label{L.2}
 L_{ij}(f_i,f_j) &:= \frac{1}{\sqrt{M_{i}}}\Div_p\Id\sqrt{M_i M_j'}A^{(ij)}\left[\frac{p}{m_i}-\frac{p'}{m_j}\right]\cdot \\
 &\qquad \cdot \Big(\sqrt{M_j'}\na f_i - \sqrt{M_i}\na f_j' - f_i\na\sqrt{M_j'} + f_j'\na\sqrt{M_i} \Big) dp' .\nonumber
\end{align}
We briefly recall the conserved quantities for $L_i$:
\begin{theo}\label{theorem:linear_conservation}
Let $f_i$, $i=1,...,N$ be the solution to the linear system: 
\begin{equation*}
 \left\{ \begin{array}{c} 
                         \pa_t f_i + \frac{p}{m_i} \cdot \nabla_x f_i =   \sum_{j=1}^N L_{ij}(f_i,f_j), \\ 
                         {}\\
                         f(x,p,0) = \fin(x,p),\\
           \end{array}  \right.
\end{equation*}
with $L_{ij}$ defined as in (\ref{L.2}). The mass $\int \Id \sqrt{M_{i}}f_i \; dpdx$, total momentum $\sum_{i=1}^N \int \Id p \sqrt{M_{i}}f_i \; dpdx$ and total energy 
$\sum_{i=1}^N \int \Id (|p|^2/2m_i) \sqrt{M_{i}}f_i \; dpdx$ are constant in time.

\end{theo}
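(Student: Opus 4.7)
The plan is to reduce the three conservation laws to the algebraic identities for $Q_{ij}$ already proved in Theorem \ref{theorem:non-linear_conservation}. This reduction is made available by the identity visible in \eqref{Lin.1}--\eqref{Lin.2},
\[
\sqrt{M_{i}}\,L_{ij}(f_{i},f_{j})=Q_{ij}(\sqrt{M_{i}}f_{i},M_{j})+Q_{ij}(M_{i},\sqrt{M_{j}}f_{j}),
\]
which expresses every weighted $p$-moment of $\sqrt{M_{i}}L_{ij}$ in terms of $Q_{ij}$ evaluated on explicit pairs. The first step is to multiply the equation in the statement by $\sqrt{M_{i}}\phi_{i}(p)$ and integrate over $\T^{3}\times\R^{3}$; the transport term vanishes by integration by parts in $x$ thanks to periodicity. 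It therefore suffices to prove
\[
\sum_{i,j=1}^{N}\int \phi_{i}(p)\,\sqrt{M_{i}}\,L_{ij}(f_{i},f_{j})\,dp=0
\]
for $\phi_{i}\in\{1,\,p,\,|p|^{2}/(2m_{i})\}$.

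For the mass of each species I would take $\phi_{i}\equiv 1$ (dropping the sum over $i$): since \eqref{L.2} exhibits $L_{ij}$ as $\frac{1}{\sqrt{M_{i}}}\Div_{p}$ of a flux with Maxwellian decay at infinity, the equality $\int \sqrt{M_{i}}\,L_{ij}\,dp=\int \Div_{p}[\cdots]\,dp=0$ follows immediately from the divergence theorem.

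For the total momentum and total energy the key observation is that the algebraic identities used in the proof of Theorem \ref{theorem:non-linear_conservation},
\[
\sum_{i,j=1}^{N}\int p\,Q_{ij}(G_{i},H_{j})\,dp=0,\qquad \sum_{i,j=1}^{N}\frac{1}{m_{i}}\int \frac{|p|^{2}}{2}\,Q_{ij}(G_{i},H_{j})\,dp=0,
\]
are \emph{bilinear} in $(G,H)$ and hold for arbitrary pairs of integrable families: their derivations rely only on the exchange $(p,i)\leftrightarrow(p',j)$ together with the algebraic facts $A^{(ij)}[-w]=A^{(ij)}[w]$ and $w\cdot\Div_{w}A^{(ij)}[w]=-\tr A^{(ij)}[w]$, none of which uses a specific form of the two arguments. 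Applying these identities first with $(G_{i},H_{j})=(\sqrt{M_{i}}f_{i},M_{j})$ and then with $(G_{i},H_{j})=(M_{i},\sqrt{M_{j}}f_{j})$, and adding the two resulting equalities, produces exactly the needed cancellations for $\sum_{i,j}\int p\,\sqrt{M_{i}}L_{ij}\,dp$ and $\sum_{i,j}\tfrac{1}{m_{i}}\int \tfrac{|p|^{2}}{2}\sqrt{M_{i}}L_{ij}\,dp$.

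I do not anticipate a serious obstacle here: the nontrivial symmetries are exactly the ones already exploited for $Q_{ij}$ in the nonlinear setting, and the only work is the bookkeeping required to verify that the nonlinear conservation identities are genuinely bilinear and so survive the linearization step unchanged.
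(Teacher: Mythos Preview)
Your overall strategy---reducing to the bilinear collision operator via $\sqrt{M_i}L_{ij}=Q_{ij}(\sqrt{M_i}f_i,M_j)+Q_{ij}(M_i,\sqrt{M_j}f_j)$---is sound and is a different route from the paper, which instead works directly on the flux in \eqref{L.2}: there one integrates by parts, observes that the resulting integrand is antisymmetric under $(i,p)\leftrightarrow(j,p')$ (momentum), and for the energy uses that same swap followed by $A^{(ij)}[w]w=0$. Your approach has the virtue of recycling the nonlinear computations wholesale.

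There is, however, a real gap in one claim. The identities
\[
\sum_{i,j}\int p\,Q_{ij}(G_i,H_j)\,dp=0,\qquad \sum_{i,j}\frac{1}{m_i}\int \frac{|p|^{2}}{2}\,Q_{ij}(G_i,H_j)\,dp=0
\]
do \emph{not} hold for arbitrary families $(G,H)$. Already for $N=1$ with Maxwell molecules ($\gamma=0$) one computes $\int p\,Q(G,H)\,dp=-\tfrac{4C}{m^{2}}\big(\rho_H\!\int pG\,dp-\rho_G\!\int pH\,dp\big)$, which is generically nonzero. What the nonlinear proof actually delivers---either by inspecting the $(p,i)\leftrightarrow(p',j)$ exchange, or by polarizing the diagonal identity $P(F,F)=0$---is \emph{antisymmetry}: with $P(G,H):=\sum_{i,j}\int p\,Q_{ij}(G_i,H_j)\,dp$ one has $P(G,H)=-P(H,G)$, and likewise for the energy functional. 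That is exactly what you need, since then $P(\sqrt{M}f,M)+P(M,\sqrt{M}f)=0$. So your plan works once the lemma is stated correctly; the step of ``adding the two resulting equalities'' is not cosmetic but is where the cancellation actually occurs.
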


\begin{proof}
 The mass of each function $\sqrt{M_{i}} f_i$ is conserved because of the divergence form of the operator. Moreover, with an integration by parts we can deduce
\begin{align*}
 \Id p\sum_{i=1}^N \sqrt{M_{i}} &L_i(f_1,\ldots,f_N) dp\\
 = -\sum_{i,j=1}^N\IId &
 \sqrt{M_i M_j'}A^{(ij)}\left[\frac{p}{m_i}-\frac{p'}{m_j}\right]\cdot \\
 &\cdot\left( \sqrt{M_j'}\na f_i - \sqrt{M_i}\na f_j' - f_i\na\sqrt{M_j'} + f_j'\na\sqrt{M_i} \right) dp dp' = 0,
\end{align*}
because the quantity inside the integral is antisymmetric for the transformation $(i,p)\leftrightarrow (j,p')$. Finally, the same transformation
and another integration by parts allow us to write:
\begin{align*}
 \Id\sum_{i=1}^N &\frac{|p|^2}{2m_{i}}\sqrt{M_{i}}L_i(f_1,\ldots,f_N) dp\\
 =  -\sum_{i,j=1}^N&\IId
 \sqrt{M_i M_j'}\frac{p}{m_{i}}\cdot A^{(ij)}\left[\frac{p}{m_i}-\frac{p'}{m_j}\right]\cdot \\
 & \cdot \left( \sqrt{M_j'}\na f_i - \sqrt{M_i}\na f_j' - f_i\na\sqrt{M_j'} + f_j'\na\sqrt{M_i} \right) dp dp'\\
 = -\sum_{i,j=1}^N&\IId
\frac{1}{2} \sqrt{M_i M_j'}\left(\frac{p}{m_{i}}-\frac{p'}{m_{j}}\right)\cdot A^{(ij)}\left[\frac{p}{m_i}-\frac{p'}{m_j}\right]\cdot \\
 &\cdot \left( \sqrt{M_j'}\na f_i - \sqrt{M_i}\na f_j' - f_i\na\sqrt{M_j'} + f_j'\na\sqrt{M_i} \right) dp dp' = 0,
\end{align*}
since $A^{(ij)}\left[\frac{p}{m_i}-\frac{p'}{m_j}\right](\frac{p}{m_{i}}-\frac{p'}{m_{j}})=0$. The proof is complete. 
\end{proof}

\subsection*{Structure of the linearized collision operator.}
We first show that $L_{ij}$ can be rewritten in the following form:
\begin{align}
 L_{ij}(f_i,f_j) = \frac{1}{\sqrt{M_i}}\Div_p\int M_i M_j' A^{(ij)}\left[ \frac{p}{m_i} - \frac{p'}{m_j} \right]
 \left( \na\left( \frac{f_i}{\sqrt{M_i}} \right) - \na\left( \frac{f_j'}{\sqrt{M_j'}} \right) \right) dp' .  \label{Lij.b}
\end{align}
To prove \eqref{Lij.b} we first notice that:
\begin{align*}
 \na\log\sqrt{M_j'} = -\frac{1}{2 k_{B}T} \frac{p'}{m_j} = \frac{1}{2k_{B}T}\left( \frac{p}{m_i}-\frac{p'}{m_j}\right) +\na\log\sqrt{M_i}.
\end{align*}
It follows that the term $\sqrt{M_j'}\na f_i - f_i\na\sqrt{M_j'}$ inside \eqref{L.2} can be rewritten as:
\begin{align*}
 \sqrt{M_j'}\na f_i - f_i\na\sqrt{M_j'} &= \sqrt{M_i M_j'}\left( \frac{\na f_i}{\sqrt{M_i}} - \frac{f_i}{\sqrt{M_i}}\na\log\sqrt{M_j'} \right)\\
& = \sqrt{M_i M_j'}\left( \frac{\na f_i}{\sqrt{M_i}} - \frac{f_i}{\sqrt{M_i}}\na\log\sqrt{M_i} \right) - \frac{f_{i}\sqrt{M_{j}'}}{2k_{B}T}\left( \frac{p}{m_i}-\frac{p'}{m_j}\right)\\
& = \sqrt{M_i M_j'}\na\left( \frac{f_i}{\sqrt{M_i}} \right) - \frac{f_{i}\sqrt{M_{j}'}}{2k_{B}T}\left( \frac{p}{m_i}-\frac{p'}{m_j}\right).
\end{align*}
The other term $\sqrt{M_i}\na f_j' - f_j'\na\sqrt{M_i}$ is treated in a similar way. This shows that \eqref{Lij.b} and \eqref{L.2}  are equivalent formulations.\\

{{We will now decompose the operator $L = (L_1, L_2, ..., L_N)$ as $L = L^m + L^b$, where $L^m$ and $L^b$ respectively describe collisions between particles of the same species and of different species. More precisely,
\begin{align*}
 L^m(f) &:=  (L_{11}(f_1,f_1), ...,L_{NN}(f_N,f_N)), \\
 L^b(f) &:= (\sum_{j\neq 1}L_{1j}(f_1,f_j), ..., \sum_{j\neq N} L_{Nj}(f_N,f_j).
\end{align*}

\begin{theo} \label{theo:semi-def}
Both operators $L^m$ and $L^b$ are negative semidefinite. Moreover $f \in N(L)$ if and only if 
\begin{align*}
 f_i = M_i^{1/2}\left( \beta^{(0)}_i + \beta^{(1)}\cdot p + \beta^{(2)} \frac{|p|^2}{2 m_i} \right),
 \qquad i=1,\ldots,N , 
\end{align*}
for some $p$-independent real coefficients $\beta^{(0)}_i$, $i=1,\ldots,N$, $\beta^{(1)}$ and $\beta^{(2)} $, and $f \in N(L^m)$ if and only if:
\begin{align*}
 f_i = M_i^{1/2}\left( \alpha^{(0)}_i + \alpha^{(1)}_i\cdot p + \alpha^{(2)}_i |p|^2 \right),
 \qquad i=1,\ldots,N ,
\end{align*}
for some $p-$independent real coefficients $\alpha^{(0)}_i$, $\alpha^{(1)}_i$, $\alpha^{(2)}_i $, $i=1,\ldots,N$.

\end{theo}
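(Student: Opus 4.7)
The plan is to use the equivalent formulation \eqref{Lij.b} of $L_{ij}$ to express the quadratic form $(f,Lf)_{L^2_p}$ as a manifestly non-positive integral, and then to read off the structure of the kernel from the equality case. Throughout set $g_i := f_i/\sqrt{M_i}$ and recall that $A^{(ij)}[z]$ is the orthogonal projector onto $z^\perp$ (up to a positive scalar), in particular positive semidefinite.

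The first step is the following symmetric identity, proved by a single integration by parts in $p$ followed by the change of variables $(i,p)\leftrightarrow (j,p')$, using $A^{(ij)}=A^{(ji)}$ and $A^{(ij)}[-z]=A^{(ij)}[z]$:
\begin{align*}
(f_i,L_{ij}(f_i,f_j))_{L^2_p}+(f_j,L_{ji}(f_j,f_i))_{L^2_p}
=-\IId M_i M_j'\bigl(\na_p g_i-\na_{p'}g_j'\bigr)\cdot A^{(ij)}\!\left[\tfrac{p}{m_i}-\tfrac{p'}{m_j}\right]\!\bigl(\na_p g_i-\na_{p'}g_j'\bigr)dp\,dp'.
\end{align*}
Summing this identity over $i=j$ gives $(f,L^m f)_{L^2_p}\leq 0$, while summing over the pairs $i\neq j$ yields $(f,L^b f)_{L^2_p}\leq 0$. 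This establishes the negative semidefiniteness claim.

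For the characterization of the kernels I would argue as follows. If $f\in N(L)$, then $(f,Lf)_{L^2_p}=0$, so the integrand above must vanish for every pair $(i,j)$. Since $A^{(ij)}[z]$ projects onto $z^\perp$, this forces the existence of scalar functions $\mu_{ij}(p,p')$ such that
\begin{align*}
\na_p g_i(p)-\na_{p'}g_j(p')=\mu_{ij}(p,p')\left(\tfrac{p}{m_i}-\tfrac{p'}{m_j}\right),\qquad i,j=1,\dots,N.
\end{align*}
This is structurally identical to \eqref{stat} with $\log f_i$ replaced by $g_i$, so the argument following \eqref{stat} in the proof of Theorem~\ref{theorem:non-linear_conservation} applies verbatim: differentiating in $p$, setting $p'/m_j=p/m_i$, and using commutativity of mixed derivatives together with the swap symmetry $\mu_{ij}(p,p')=\mu_{ji}(p',p)$, one concludes that $g_i$ is a second-order polynomial in $p$ and that its linear and quadratic coefficients have the form $\beta^{(1)}$ and $\beta^{(2)}/(2m_i)$ with $\beta^{(1)},\beta^{(2)}$ \emph{independent of $i$}. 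Only the constant term $\beta^{(0)}_i$ remains free. This gives the stated description of $N(L)$.

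For $N(L^m)$ the same reasoning applies, but now the relation only holds for $i=j$, so there is no coupling between different species and each $g_i$ may be analyzed separately; the only constraint is that $\na_p g_i(p)-\na_{p'}g_i(p')$ is parallel to $p-p'$, which forces $g_i$ to be an arbitrary second-order polynomial $\alpha^{(0)}_i+\alpha^{(1)}_i\cdot p+\alpha^{(2)}_i|p|^2$ with coefficients depending freely on $i$, yielding the claimed form. The main obstacle is purely bookkeeping: keeping straight which gradient is with respect to $p$ versus $p'$ in the swap argument, and being careful that the inter-species coupling in $N(L)$ is what collapses three per-species parameters into one shared $\beta^{(1)}$ and one shared $\beta^{(2)}$ (the macroscopic momentum and inverse temperature), as dictated by the conservation laws of Theorem~\ref{theorem:linear_conservation}.
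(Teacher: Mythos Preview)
Your proposal is correct and follows essentially the same route as the paper: both use the formulation \eqref{Lij.b}, an integration by parts combined with the swap $(i,p)\leftrightarrow(j,p')$ to exhibit $(f,L^m f)_{L^2_p}$ and $(f,L^b f)_{L^2_p}$ as non-positive quadratic integrals, and then read off the kernels from the equality case by invoking the argument that solved \eqref{stat}. The only cosmetic difference is that you introduce the shorthand $g_i=f_i/\sqrt{M_i}$ and treat $N(L)$ before $N(L^m)$, whereas the paper does it in the opposite order; the substance is identical.
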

\begin{proof}

A change of variable $p\leftrightarrow p'$ allows to write
 \begin{align*}
 (f,L^m f)_{L_p^2} := &\sum_{i=1}^N (f_i,L_{ii}( f_i,f_i))_{L_p^2} \\
=&  -\frac{1}{2} \sum_{i=1}^N\Id\Id   M_i M_i' A^{(ii)}\left[ \frac{p}{m_i} - \frac{p'}{m_i} \right]
 \left( \na\left( \frac{f_i}{\sqrt{M_i}} \right) - \na\left( \frac{f_i'}{\sqrt{M_i'}} \right) \right)\cdot  \\
 & \qquad \qquad \cdot  \left( \na\left( \frac{f_i}{\sqrt{M_i}} \right) - \na\left( \frac{f_i'}{\sqrt{M_i'}} \right) \right) \;dpdp' \le 0.
 \end{align*}
 Using the same change of variable, for each $i\neq j$ one can show that 
 \begin{align}
   (f_i,L_{ij}( f_i,f_j))_{L_p^2} +&  (f_j,L_{ji}( f_j,f_i))_{L_p^2} \nonumber \\
   = -& \Id\Id   M_i M_j' A^{(ii)}\left[ \frac{p}{m_i} - \frac{p'}{m_j} \right]
 \left( \na\left( \frac{f_i}{\sqrt{M_i}} \right) - \na\left( \frac{f_j'}{\sqrt{M_j'}} \right) \right)\cdot  \nonumber\\
 & \qquad \qquad \cdot  \left( \na\left( \frac{f_i}{\sqrt{M_i}} \right) - \na\left( \frac{f_j'}{\sqrt{M_j'}} \right) \right) \;dpdp' \le 0, \label{skew_symm}
 \end{align}
  which yields $(f,L^b f)_{L_p^2} := \sum_{\substack{i,j=1\\ j\neq i}}^N (f_i, L_{ij}(f_i,f_j))_{L_p^2}  \leq 0$ for all $f\in D(L)$.

It is clear that $(f,L^m f)_{L_p^2}=0$ if and only if 
\begin{align*}
 \na\left( \frac{f_i}{\sqrt{M_i}} \right) - \na\left( \frac{f_i'}{\sqrt{M_i'}} \right) 
 = \mu_{ij}[p,p']\left( \frac{p}{m_i} - \frac{p'}{m_i} \right).
\end{align*}
By employing the same method that was used to solve \eqref{stat} we find that $f \in N(L^m)$ if and only if:
\begin{align}
 f_i = M_i^{1/2}\left( \alpha^{(0)}_i + \alpha^{(1)}_i\cdot p + \alpha^{(2)}_i |p|^2 \right),
 \qquad i=1,\ldots,N , \label{N.Lm}
\end{align}
for some $p-$independent real coefficients $\alpha^{(0)}_i$, $\alpha^{(1)}_i$, $\alpha^{(2)}_i$, $i=1,\ldots,N$.
Eq.~\eqref{N.Lm} is a complete characterization of $N(L^m)$. A similar strategy yields the description of the kernel of $L$: $f \in N(L)$ if and only if 
\begin{align}
 f_i = M_i^{1/2}\left( \beta^{(0)}_i + \beta^{(1)}\cdot p + \beta^{(2)} \frac{|p|^2}{2 m_i} \right),
 \qquad i=1,\ldots,N , \label{N.L}
\end{align}
for some $p$-independent real coefficients $\beta^{(0)}_i$, $i=1,\ldots,N$, $\beta^{(1)}$ and $\beta^{(2)} $. Eq.~\eqref{N.L} is a complete characterization of $N(L)$.

\end{proof}
}}

\section{Proof of Theorem  ~\ref{thr.spec.gap}}\label{Proof Teorem 1}
This section is devoted to the proof of Theorem  ~\ref{thr.spec.gap} which states that the multi-species linearized Landau collision operator $L=(L_1,L_2,...,L_N)$ defined as in (\ref{L.2}) has a spectral gap in the Hilbert space $\cH$.\\

 The starting point in the proof is the already known spectral gap for the mono-species operator proven in several works, including \cite{Guo,MS07} and summarized in the next lemma.
\begin{lemma}\label{lem.spec.gap}
There exists an explicitly computable constant $\lambda_m>0$ such that:
\begin{align*}
 -(f,L^m f) \geq \lambda_m\|f - \Pi^m f\|_{\cH}^2\qquad f\in D(L^m),
\end{align*} 
where $\Pi^m$ denotes the projection operator onto the subspace $N(L^m)$.
\end{lemma}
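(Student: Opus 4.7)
The first observation is that $L^m$ acts diagonally on the species index: $L^m(f) = (L_{11}(f_1,f_1),\ldots,L_{NN}(f_N,f_N))$. Consequently, both the quadratic form $-(f,L^m f)_{L^2_p}$ and the norm $\|\cdot\|_\cH$ split as a sum over $i=1,\ldots,N$, and by Theorem \ref{theo:semi-def} the kernel decomposes as $N(L^m) = \bigoplus_{i=1}^N N(L_{ii})$, where $N(L_{ii}) = \mathrm{span}\{\sqrt{M_i},\, p_1\sqrt{M_i},\, p_2\sqrt{M_i},\, p_3\sqrt{M_i},\, |p|^2\sqrt{M_i}\}$. The orthogonal projection $\Pi^m$ therefore also decomposes componentwise. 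Hence the estimate reduces to proving, for each fixed $i$, a single-species coercivity bound
\begin{align*}
 -(f_i, L_{ii} f_i)_{L^2_p} \geq \lambda_i\, \|f_i - \Pi^m_i f_i\|_{\cH}^2,
\end{align*}
and then taking $\lambda_m := \min_i \lambda_i$.

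The plan for the single-species estimate is to exploit the Dirichlet form representation derived from \eqref{Lij.b}: by the symmetrization $p\leftrightarrow p'$,
\begin{align*}
 -(f_i, L_{ii} f_i)_{L^2_p} = \frac{1}{2}\iint M_i M_i'\, \Phi_i\cdot A^{(ii)}\!\left[\tfrac{p-p'}{m_i}\right]\Phi_i \; dp\,dp',\qquad \Phi_i := \nabla\!\bigl(\tfrac{f_i}{\sqrt{M_i}}\bigr) - \nabla\!\bigl(\tfrac{f_i'}{\sqrt{M_i'}}\bigr).
\end{align*}
This is the mono-species Landau Dirichlet form, already analyzed in the cited works \cite{BM05, DL97, Guo, M06, MS07}. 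The strategy is: (i) use the fact that $A^{(ii)}[z] = C^{(ii)}\varphi(|z|)(\mathbb{I} - z\otimes z/|z|^2)$ is a projection onto $z^\perp$ times a scalar weight; (ii) decompose $\nabla g_i$ with $g_i = f_i/\sqrt{M_i}$ into the component parallel to $p$ and the transverse component to match the anisotropic weights appearing in $\|\cdot\|_\cH$; (iii) exploit that when $g_i'$ is frozen at a generic $p'$, the projector $\mathbb{I} - (p-p')\otimes(p-p')/|p-p'|^2$ recovers, after integration in $p'$ against $M_i'$, the full gradient of $g_i$ up to a non-local term that can be absorbed by the spectral gap on the finite-dimensional kernel.

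The two key technical steps are the following. First, one shows the upper/lower-bound sandwich for the anisotropic diffusion coefficients
$\bar a^{(ii)}_{k\ell}(p) := \int A^{(ii)}_{k\ell}[(p-p')/m_i] M_i'\,dp'$, namely $\bar a^{(ii)}(p) \sim \langle p\rangle^{\gamma}P + \langle p\rangle^{\gamma+2}(\mathbb{I}-P)$ with explicit constants; this yields the equivalence of the local part of the Dirichlet form with the first two pieces of $\|\cdot\|_\cH^2$. Second, one controls the zeroth-order part $\|\langle p\rangle^{(\gamma+2)/2} f_i\|^2$: this is the delicate step, because for $\gamma\in[-2,1]$ the Dirichlet form degenerates at infinity for soft potentials. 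The standard route is a Poincaré-type inequality on the orthogonal complement of $N(L_{ii})$ inside the Gaussian-weighted space, combined with a compactness/contradiction argument (Guo) or a direct Carleman-type computation (Mouhot--Strain). The main obstacle is precisely obtaining \emph{constructive} constants in the soft-potential range $\gamma\in[-2,0]$; for this one invokes the constructive versions in \cite{Guo, MS07, BM05} and takes $\lambda_m$ as the worst of the explicit single-species gaps.

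Since all of the above is already carried out in the mono-species literature, our contribution here is limited to pointing out that the decomposition $L^m = \bigoplus_i L_{ii}$ and the corresponding splitting of $\Pi^m$ make the spectral gap of $L^m$ an immediate consequence of the known estimates applied species by species; we therefore content ourselves with citing \cite{BM05, DL97, Guo, M06, MS07} for the proof of the mono-species bound.
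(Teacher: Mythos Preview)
Your proposal is correct and follows essentially the same approach as the paper: the paper does not prove Lemma~\ref{lem.spec.gap} at all but simply cites it as an already known result for the mono-species operator (see the sentence preceding the lemma and the references \cite{BM05, DL97, Guo, M06, MS07}). Your write-up makes explicit the elementary diagonal reduction $L^m=\bigoplus_i L_{ii}$ and the choice $\lambda_m=\min_i\lambda_i$, which the paper leaves implicit, but the substance is the same: invoke the single-species coercivity estimate from the literature.
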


We will now follow an approach similar to the one formulated in \cite{DauJueMouZam}. 
We first write 
$$
f = f^\parallel  + f^\perp,
$$
with
$$
f^\parallel := \Pi^m f, \quad f^\perp :=  (\mathbb{I}-\Pi^m)f .
$$
From (\ref{skew_symm}) it follows:
\begin{align*}
 -(f,L^b f)_{L^2_p} = \frac{1}{2}\sum_{\substack{i,j=1\\ j\neq i}}^N\iint M_i M_j' (w_p + w_o)\cdot 
 A^{(ij)}\left[\frac{p}{m_i}-\frac{p'}{m_j}\right] (w_p + w_o) dp dp',
\end{align*}
with 
\begin{align*}
  w_p := \na\left( \frac{f_i^\parallel}{\sqrt{M_i}} \right) - \na\left( \frac{(f_j^\parallel)'}{\sqrt{M_j'}} \right),\qquad
 w_o := \na\left( \frac{f_i^\perp}{\sqrt{M_i}} \right) - \na\left( \frac{(f_j^\perp)'}{\sqrt{M_j'}} \right).
\end{align*}
Since $A^{(ij)}$ is symmetric and positive definite, Young's inequality yields
$$
\frac{1}{2} w_p \cdot A^{(ij)} w_o + \frac{1}{2} w_o \cdot A^{(ij)} w_p = w_p \cdot A^{(ij)} w_o \ge -\frac{1}{4} w_p \cdot A^{(ij)} w_p -  w_o \cdot A^{(ij)} w_o, 
$$
and 
\begin{align}
 -(f,L^b f)_{L^2_p} \geq & \;\frac{1}{4}
 \sum_{\substack{i,j=1\\ j\neq i}}^N\iint M_i M_j' w_p \cdot A^{(ij)}\left[\frac{p}{m_i}-\frac{p'}{m_j}\right] w_p dp dp' \nonumber \\
 & -\frac{1}{2}\sum_{\substack{i,j=1\\ j\neq i}}^N\iint M_i M_j' w_o \cdot A^{(ij)}\left[\frac{p}{m_i}-\frac{p'}{m_j}\right] w_o dp dp' \nonumber \\
 =& -\frac{1}{2}(f^\parallel,L^b f^\parallel)_{L^2_p} -\frac{1}{2}\sum_{\substack{i,j=1\\ j\neq i}}^N\iint M_i M_j' w_o \cdot
 A^{(ij)}\left[\frac{p}{m_i}-\frac{p'}{m_j}\right] w_o dp dp'.\label{fLbf}
\end{align}
Let us estimate the second term on the right-hand side of \eqref{fLbf}. Applying Young's inequality one more time we get  
\begin{align*}
  \frac{1}{2}\sum_{\substack{i,j=1\\ j\neq i}}^N&\iint M_i M_j' w_o \cdot A^{(ij)}\left[\frac{p}{m_i}-\frac{p'}{m_j}\right] w_o dp dp'\\
 & \leq\sum_{\substack{i,j=1\\ j\neq i}}^N\iint M_i M_j' \na\left( \frac{f_i^\perp}{\sqrt{M_i}} \right) \cdot
  A^{(ij)}\left[\frac{p}{m_i}-\frac{p'}{m_j}\right]\na\left( \frac{f_i^\perp}{\sqrt{M_i}} \right) dp dp'\\
 &\qquad + \sum_{\substack{i,j=1\\ j\neq i}}^N\iint M_i M_j' \na\left( \frac{(f_j^\perp)'}{\sqrt{M_j'}}\right) \cdot
  A^{(ij)}\left[\frac{p}{m_i}-\frac{p'}{m_j}\right]\na\left( \frac{(f_j^\perp)'}{\sqrt{M_j'}}\right) dp dp'\\
 & = 2\sum_{\substack{i,j=1\\ j\neq i}}^N\iint M_i M_j' \na\left( \frac{f_i^\perp}{\sqrt{M_i}} \right) \cdot
  A^{(ij)}\left[\frac{p}{m_i}-\frac{p'}{m_j}\right]\na\left( \frac{f_i^\perp}{\sqrt{M_i}} \right) dp dp'.
\end{align*}
Since
\begin{align*}
 \na\left(\frac{f_i^\perp}{\sqrt{M_i}}\right) 
 &= \frac{\na f_i^\perp}{\sqrt{M_i}} - \frac{f_i^\perp}{\sqrt{M_i}}\na\log\sqrt{M_i},
\end{align*}
we have
\begin{align}
 \frac{1}{2}\sum_{\substack{i,j=1\\ j\neq i}}^N&\iint M_i M_j' w_o \cdot A^{(ij)}\left[\frac{p}{m_i}-\frac{p'}{m_j}\right] w_o dp dp' \nonumber\\
 \nonumber
 &\leq 4\sum_{\substack{i,j=1\\ j\neq i}}^N\iint M_j' 
 \na f_i^\perp\cdot A^{(ij)}\left[\frac{p}{m_i}-\frac{p'}{m_j}\right]\na f_i^\perp dp dp'\\
 \nonumber
 &\qquad + 4\sum_{\substack{i,j=1\\ j\neq i}}^N\iint M_j' 
 (f_i^\perp)^2 \na\log\sqrt{M_i}\cdot A^{(ij)}\left[\frac{p}{m_i}-\frac{p'}{m_j}\right]
 \na\log\sqrt{M_i} dp dp'\\
 &\leq \sum_{i=1}^N\int \na f_i^\perp\cdot\cA^{(i)}\na f_i^\perp dp
 + \sum_{i=1}^N\int (f_i^\perp)^2 \cB^{(i)} dp,\label{intwo.1}
\end{align}
with
\begin{align*}
 \cA^{(i)} &:= 4\sum_{j=1}^N\int M_j' A^{(ij)}\left[\frac{p}{m_i}-\frac{p'}{m_j}\right] dp',\qquad
 \cB^{(i)} := \na\log \sqrt{M_i}\cdot\cA^{(i)}\na\log\sqrt{M_i}.
\end{align*}
From \cite[Lemma 2.3]{CTW} we deduce that:
\begin{align}
 \label{est.cA}
 \na f_i^\perp\cdot\cA^{(i)}\na f_i^\perp
 &\leq C\left( \bk{p}^{\gamma}|P\na f_i^\perp|^2 + \bk{p}^{\gamma+2}|(I-P)\na f_i^\perp|^2 \right),\\
 \cB^{(i)} &\leq C\bk{p}^{\gamma+2} . \label{est.cB}
\end{align}
Inequalities \eqref{intwo.1}, \eqref{est.cA} and \eqref{est.cB} imply:
\begin{align}\label{intwo}
 \frac{1}{2}\sum_{\substack{i,j=1\\ j\neq i}}^N\iint M_i M_j' w_o \cdot A^{(ij)}\left[\frac{p}{m_i}-\frac{p'}{m_j}\right] w_o dp dp'
 \leq C_1\|f^\perp\|_\cH^2 ,
\end{align}
for some explicitly computable constant $C_1>0$. In summary we have shown that 
\begin{align} \label{Lb_est}
-(f,L^b f)_{L^2_p} \geq-\frac{1}{2}(f^\parallel,L^b f^\parallel)_{L^2_p} - C_1\|f^\perp\|_\cH^2 .
\end{align}

We are now ready to prove the next lemma:

\begin{lemma}\label{lem.fLf.1}
For each $f\in D(L)$ and $\eta\in (0,1]$ we have
\begin{align*}
 -(f,L f)_{L^2_p}\geq (\lambda_m - \eta C_1)\|f^\perp\|_\cH^2 -\frac{\eta}{2}(f^\parallel,L^b f^\parallel)_{L^2_p}.
\end{align*}
\end{lemma}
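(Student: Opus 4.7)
The plan is to combine the three ingredients already developed in the excerpt: the mono-species spectral gap of Lemma~\ref{lem.spec.gap}, the non-positivity of $L^b$ from Theorem~\ref{theo:semi-def}, and the splitting estimate~\eqref{Lb_est}. First I would decompose $-(f,Lf)_{L^2_p} = -(f,L^m f)_{L^2_p} - (f,L^b f)_{L^2_p}$ and observe that, since $-(f,L^b f)_{L^2_p}\geq 0$ for all $f\in D(L)$, for any $\eta\in(0,1]$ we have the trivial but crucial inequality
\begin{align*}
 -(f,L^b f)_{L^2_p} \;=\; \eta\bigl(-(f,L^b f)_{L^2_p}\bigr) + (1-\eta)\bigl(-(f,L^b f)_{L^2_p}\bigr) \;\geq\; -\eta(f,L^b f)_{L^2_p}.
\end{align*}
This gives us the freedom to weight the cross-species contribution by an arbitrarily small factor $\eta$, which will later be chosen to absorb the bad term $C_1\|f^\perp\|_\cH^2$ coming from \eqref{Lb_est}.

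Next I would apply Lemma~\ref{lem.spec.gap} directly to the mono-species piece, yielding
\begin{align*}
 -(f,L^m f)_{L^2_p} \;\geq\; \lambda_m\|f - \Pi^m f\|_\cH^2 \;=\; \lambda_m\|f^\perp\|_\cH^2,
\end{align*}
and apply \eqref{Lb_est} to the cross-species piece, obtaining
\begin{align*}
 -\eta(f,L^b f)_{L^2_p} \;\geq\; -\frac{\eta}{2}(f^\parallel,L^b f^\parallel)_{L^2_p} - \eta C_1\|f^\perp\|_\cH^2.
\end{align*}
Adding the two lower bounds and rearranging gives precisely
\begin{align*}
 -(f,Lf)_{L^2_p} \;\geq\; (\lambda_m - \eta C_1)\|f^\perp\|_\cH^2 - \frac{\eta}{2}(f^\parallel,L^b f^\parallel)_{L^2_p},
\end{align*}
which is the desired statement. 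No further computation is needed.

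There is no real obstacle here: all the heavy lifting was done in deriving \eqref{Lb_est} (via the Young splitting $w_p + w_o$ of the quadratic form and the kernel estimates \eqref{est.cA}--\eqref{est.cB} from \cite{CTW}) and in the quoted mono-species spectral gap. The only subtle point worth double-checking is that $L^m$ and $L^b$ are individually non-positive, so that the convex combination argument with parameter $\eta\in(0,1]$ is legitimate; this is exactly Theorem~\ref{theo:semi-def}. The role of this lemma in the larger proof of Theorem~\ref{thr.spec.gap} is clear: the term $-\frac{\eta}{2}(f^\parallel,L^b f^\parallel)_{L^2_p}$ will next be bounded below by the momentum/energy differences $|u_i-u_j|^2 + (e_i-e_j)^2$, which via \eqref{lemma_no_proof} control $\|f - \Pi^L f\|_\cH^2$ up to a multiple of $\|f^\perp\|_\cH^2$; choosing $\eta$ sufficiently small then guarantees that the coefficient $\lambda_m - \eta C_1$ remains strictly positive and absorbs the remainder.
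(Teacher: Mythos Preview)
Your proof is correct and follows essentially the same approach as the paper: decompose $L=L^m+L^b$, use the non-positivity of $L^b$ (Theorem~\ref{theo:semi-def}) to replace $-(f,L^b f)_{L^2_p}$ by $-\eta(f,L^b f)_{L^2_p}$ for any $\eta\in(0,1]$, and then combine Lemma~\ref{lem.spec.gap} with \eqref{Lb_est}. Your additional commentary on the role of the lemma in the global argument is accurate but not part of the proof proper.
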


\begin{proof}
Using the decomposition $L = L^m + L^b$ we get, 
\begin{align*}
-(f,L f)_{L^2_p} &= -(f,L^m f)_{L^2_p} -(f,L^b f)_{L^2_p}\\
& \ge  -(f,L^m f)_{L^2_p} -\eta (f,L^b f)_{L^2_p}
\end{align*}
for each $\eta\in (0,1]$, since $L^b$ is a negative semidefinite operator, as shown in Theorem \ref{theo:semi-def}.
Finally Lemma \ref{lem.spec.gap} and (\ref{Lb_est}) imply 
\begin{align*}
-(f,L f)_{L^2_p} & \ge  \lambda_m\|f ^\perp\|_{\cH}^2 -\eta\left( \frac{1}{2}(f^\parallel,L^b f^\parallel)_{L^2_p}  +  C_1\|f^\perp\|_\cH^2\right),
\end{align*}
 which finishes the proof.

\end{proof}

We focus now our attention on $(f^\parallel,L^b f^\parallel)_{L^2_p}$. From \eqref{N.Lm} it follows 
\begin{align}
 f_i^\parallel = (\Pi^m f)_i = M_i^{1/2}\left(\alpha_i + u_i\cdot p + e_i\frac{|p|^2}{2 m_i}\right),\qquad i=1,\ldots,N, \label{f.par}
\end{align}
for a suitable choice of $\alpha_i$, $u_i$, $e_i$. We get:
\begin{align*}
 &-(f^\parallel,L^b f^\parallel)_{L^2_p}
 = \frac{1}{2}\sum_{\substack{i,j=1\\ j\neq i}}^N\iint M_i M_j'
 \left( u_i - u_j + e_i\frac{p}{m_i} - e_j\frac{p'}{m_j} \right) \cdot\\
 &\qquad\qquad \cdot A^{(ij)}\left[\frac{p}{m_i}-\frac{p'}{m_j}\right]
 \left( u_i - u_j + e_i\frac{p}{m_i} - e_j\frac{p'}{m_j} \right)dp dp' .
\end{align*}
We first notice that 
\begin{align*}
 \left( u_i - u_j \right)  &\cdot \iint M_i M_j'
 A^{(ij)}\left[\frac{p}{m_i}-\frac{p'}{m_j}\right]
 \left(  e_i\frac{p}{m_i} - e_j\frac{p'}{m_j} \right)dp dp'\\
 =& \left( u_i - u_j \right)  \cdot  \frac{e_i}{m_i} \Id M_i p \left( \Id  M_j'
 A^{(ij)}\left[\frac{p}{m_i}-\frac{p'}{m_j}\right] dp' \right)dp\\
 &-  \left( u_i - u_j \right)  \cdot  \frac{e_j}{m_j} \Id M_j' p' \left(\Id  M_i
 A^{(ij)}\left[\frac{p}{m_i}-\frac{p'}{m_j}\right] dp\right)dp' .
 \end{align*}
Since the function 
$(p,p')\in\R^3\times\R^3\mapsto M_i M_j'
 A^{(ij)}\left[\frac{p}{m_i}-\frac{p'}{m_j}\right]
 \left(  e_i\frac{p}{m_i} - e_j\frac{p'}{m_j} \right)\in\R^3$ is odd,
it follows that:
 $$
  \left( u_i - u_j \right)  \cdot \iint M_i M_j'
 A^{(ij)}\left[\frac{p}{m_i}-\frac{p'}{m_j}\right]
 \left(  e_i\frac{p}{m_i} - e_j\frac{p'}{m_j} \right)dp dp' =0.
 $$
 Hence we are left with 
 \begin{align*}
 &-(f^\parallel,L^b f^\parallel)_{L^2_p}
 = \sum_{i,j=1}^N (u_i-u_j)\cdot\iint M_i M_j' A^{(ij)}\left[\frac{p}{m_i}-\frac{p'}{m_j}\right]dp dp'~(u_i-u_j)\\
 & + \sum_{i,j=1}^N \frac{(e_i-e_j)^2}{4}\iint M_i M_j' \left(\frac{p}{m_i}+\frac{p'}{m_j}\right)\cdot
 A^{(ij)}\left[\frac{p}{m_i}-\frac{p'}{m_j}\right]\left(\frac{p}{m_i}+\frac{p'}{m_j}\right) dp dp' ,\nonumber
\end{align*}
after rewriting $\left(  e_i\frac{p}{m_i} - e_j\frac{p'}{m_j} \right)$ as 
$$
\left(  e_i\frac{p}{m_i} - e_j\frac{p'}{m_j} \right) =\left(\frac{p}{m_i} +\frac{p'}{m_j}\right)\frac{(e_i-e_j)}{2} + \left(\frac{p}{m_i} -\frac{p'}{m_j}\right)\frac{(e_i+e_j)}{2}.
$$
\\

It is easy to see that, for $i,j=1,\ldots,N$, the matrix
$$ \mathscr{A}^{(ij)}\equiv \iint M_i M_j' A^{(ij)}\left[\frac{p}{m_i}-\frac{p'}{m_j}\right]dp dp' $$
is positive definite, while 
$$ \mathscr{B}^{(ij)}\equiv \frac{1}{4}\iint M_i M_j' \left(\frac{p}{m_i}+\frac{p'}{m_j}\right)\cdot
 A^{(ij)}\left[\frac{p}{m_i}-\frac{p'}{m_j}\right]\left(\frac{p}{m_i}+\frac{p'}{m_j}\right) dp dp' > 0. $$
We conclude:
\begin{lemma}\label{lem.fLbf.2}
There exists an explicitly computable constant $C_2>0$ such that:
\begin{align*}
 -(f^\parallel,L^b f^\parallel)_{L^2_p} &\geq C_2 \sum_{i,j=1}^N\left( |u_i-u_j|^2 + (e_i-e_j)^2 \right),
 \qquad f\in D(L), 
\end{align*}
where the $p-$independent quantities $u_i$, $e_i$ are related to $f$ through \eqref{f.par}.
\end{lemma}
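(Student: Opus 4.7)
The plan is to exploit the explicit quadratic identity for $-(f^\parallel, L^b f^\parallel)_{L^2_p}$ established immediately above the statement of the lemma, namely
\begin{align*}
 -(f^\parallel, L^b f^\parallel)_{L^2_p} = \sum_{i,j=1}^N (u_i-u_j)\cdot \mathscr{A}^{(ij)}(u_i-u_j) + \sum_{i,j=1}^N (e_i-e_j)^2 \mathscr{B}^{(ij)} ,
\end{align*}
and reduce the lemma to a strict positivity statement for each matrix $\mathscr{A}^{(ij)}$ and scalar $\mathscr{B}^{(ij)}$. Since every summand above is non-negative, once I show that for each pair $i\neq j$ the matrix $\mathscr{A}^{(ij)}$ has a strictly positive smallest eigenvalue and $\mathscr{B}^{(ij)} > 0$, I can simply set
\begin{align*}
C_2 := \min_{i\neq j}\, \min\{\lambda_{\min}(\mathscr{A}^{(ij)}),\, \mathscr{B}^{(ij)}\} ,
\end{align*}
which is finite and strictly positive because the minimum is taken over a finite set.

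For the strict positivity of $\mathscr{A}^{(ij)}$, I would argue as follows. Since $A^{(ij)}[z]$ is $C^{(i,j)}\varphi(|z|)$ times the orthogonal projection onto $z^\perp$, the integrand $M_i M_j' \,\xi \cdot A^{(ij)}[p/m_i - p'/m_j]\,\xi$ is continuous and non-negative in $(p,p')$ for every $\xi \in \R^3$. If $\xi^T \mathscr{A}^{(ij)} \xi = 0$, then the integrand vanishes identically, which forces $\xi$ to be parallel to $p/m_i - p'/m_j$ for every $(p,p')$ in the support of $M_i M_j'$, namely all of $\R^3 \times \R^3$. Since the direction $p/m_i - p'/m_j$ takes every value in $\R^3\setminus\{0\}$, this is impossible unless $\xi = 0$, so $\mathscr{A}^{(ij)}$ is strictly positive definite.

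For $\mathscr{B}^{(ij)}$, I would observe that the integrand is a strictly positive Gaussian weight times the non-negative quantity $\tfrac{1}{4}(p/m_i+p'/m_j)\cdot A^{(ij)}[p/m_i-p'/m_j](p/m_i+p'/m_j)$. The latter vanishes only on the set where $p/m_i + p'/m_j$ is parallel to $p/m_i - p'/m_j$, which has Lebesgue measure zero in $\R^3\times\R^3$. Hence the integral is strictly positive. No single step is really the main obstacle here; the proof is essentially bookkeeping once the structural identity preceding the lemma is in hand, and the only substantive point is the non-degeneracy argument for $\mathscr{A}^{(ij)}$, which follows from the fact that the range of directions $\{p/m_i - p'/m_j : (p,p') \in \R^3\times\R^3\}$ spans all of $\R^3$.
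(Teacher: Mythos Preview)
Your proposal is correct and follows exactly the paper's approach: the paper derives the identity $-(f^\parallel,L^b f^\parallel)_{L^2_p} = \sum_{i,j}(u_i-u_j)\cdot\mathscr{A}^{(ij)}(u_i-u_j) + \sum_{i,j}(e_i-e_j)^2\mathscr{B}^{(ij)}$ and then simply asserts that $\mathscr{A}^{(ij)}$ is positive definite and $\mathscr{B}^{(ij)}>0$, immediately concluding the lemma. You have merely supplied the details behind that assertion (the non-degeneracy argument via the range of directions $p/m_i - p'/m_j$), which the paper omits as ``easy to see''.
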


The last step in the proof of the spectral gap for $L$ is the result shown in the next lemma. 
\begin{lemma}\label{lem.ue}
 There exists an explicitly computable constant $C_3>0$ such that
 \begin{align}
  \sum_{i,j=1}^N\left( |u_i-u_j|^2 + (e_i-e_j)^2 \right)\geq C_3\left( 
  \|f - \Pi^L f\|_{\cH}^2 - 2\|f^\perp\|_{\cH}^2 \right),\qquad f\in D(L),\label{est.ue}
 \end{align}
 where the $p-$independent quantities $u_i$, $e_i$ are related to $f$ through \eqref{f.par}.
\end{lemma}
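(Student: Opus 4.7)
The plan is to exploit the nested structure $N(L)\subset N(L^m)$, which forces $\Pi^L$ to annihilate $f^\perp$ and reduces the whole estimate to the finite-dimensional space $N(L^m)$, on which all the relevant norms are equivalent.

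\textbf{Step 1 (kill the $f^\perp$-piece).} Since $f^\perp=(\mathbb{I}-\Pi^m)f$ is $L^2_p$-orthogonal to $N(L^m)$ and $N(L)\subset N(L^m)$, one has $\Pi^L f^\perp=0$, so $\Pi^L f=\Pi^L f^\parallel$. Writing $f-\Pi^L f=(f^\parallel-\Pi^L f^\parallel)+f^\perp$ and applying $(a+b)^2\le 2a^2+2b^2$ in the $\cH$ norm gives
$$\|f-\Pi^L f\|_\cH^2 -2\|f^\perp\|_\cH^2 \le 2\|f^\parallel-\Pi^L f^\parallel\|_\cH^2.$$
It therefore suffices to bound the right-hand side by a multiple of $\sum_{i,j}\bigl(|u_i-u_j|^2+(e_i-e_j)^2\bigr)$.

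\textbf{Step 2 (compare against an explicit element of $N(L)$).} The subspace $N(L^m)$ is finite-dimensional (parameterized by the scalars $\alpha_i$, the vectors $u_i$ and the scalars $e_i$, $i=1,\dots,N$), so there is a constant $C_1>0$ with $\|h\|_\cH\le C_1\|h\|_{L^2_p}$ for every $h\in N(L^m)$. Since $f^\parallel\in N(L^m)$ and $N(L)\subset N(L^m)$, the difference $f^\parallel-g$ lies in $N(L^m)$ for any $g\in N(L)$. Using the variational property of the $L^2_p$-projection $\Pi^L$ and the continuous embedding $\cH\hookrightarrow L^2_p$,
$$\|f^\parallel-\Pi^L f^\parallel\|_\cH \le C_1\|f^\parallel-\Pi^L f^\parallel\|_{L^2_p} \le C_1\|f^\parallel-g\|_{L^2_p} \le C_1 C_{\textrm{emb}}\|f^\parallel-g\|_\cH.$$
Now choose $g\in N(L)$ with coefficients $\beta_i^{(0)}=\alpha_i$, $\beta^{(1)}=\bar u:=\frac{1}{N}\sum_k u_k$, $\beta^{(2)}=\bar e:=\frac{1}{N}\sum_k e_k$, which yields
$$(f^\parallel-g)_i=M_i^{1/2}\Bigl((u_i-\bar u)\cdot p+(e_i-\bar e)\frac{|p|^2}{2m_i}\Bigr).$$

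\textbf{Step 3 (Gaussian moments and a variance identity).} Each component of $f^\parallel-g$ is a degree-two polynomial in $p$ times $M_i^{1/2}$. Computing the $\cH$ norm reduces to a finite list of Gaussian integrals, all finite because $\gamma\in[-2,1]$ makes the polynomial weights $\langle p\rangle^{\gamma+2}$ absorbed by the exponential decay of $M_i$. The result is a bound of the form
$$\|f^\parallel-g\|_\cH^2 \le C_2\sum_{i=1}^N\bigl(|u_i-\bar u|^2+(e_i-\bar e)^2\bigr).$$
Finally, the elementary variance identity $\sum_{i=1}^N|u_i-\bar u|^2=\frac{1}{2N}\sum_{i,j=1}^N|u_i-u_j|^2$ (and the analogue for $e_i-\bar e$) converts centered sums into pairwise differences and delivers \eqref{est.ue} with an explicit $C_3>0$.

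\textbf{Main obstacle.} Conceptually Step 1 does the essential work; once one notices $\Pi^L f=\Pi^L f^\parallel$, the rest is finite-dimensional linear algebra together with standard Gaussian integration, neither of which is delicate. The one mildly subtle point is that the optimality of $\Pi^L$ in Step 2 holds in the $L^2_p$ norm while the target estimate is in $\cH$, forcing the detour through the equivalence of the two norms on $N(L^m)$; the finite-dimensionality of $N(L^m)$ is what makes this harmless.
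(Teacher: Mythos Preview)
Your proposal is correct and follows essentially the same strategy the paper invokes: the paper does not give its own argument here but refers to \cite[Lemma~15]{DauJueMouZam}, whose proof is likewise built on the inclusion $N(L)\subset N(L^m)$ and the explicit finite-dimensional structure of these kernels. Your Step~1 (using $\Pi^L f^\perp=0$ to reduce to $f^\parallel$), Step~2 (norm equivalence on the finite-dimensional $N(L^m)$ together with the variational property of $\Pi^L$), and Step~3 (Gaussian moments plus the variance identity) reproduce that line of reasoning cleanly; the detour through $L^2_p$ in Step~2 is slightly roundabout---one could compute $\|f^\parallel-g\|_{L^2_p}$ directly in Step~3---but the logic is sound.
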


For the proof of Lemma \ref{lem.ue} we refer directly to the one of Lemma 15 in \cite{DauJueMouZam}. In such lemma the authors prove (\ref{est.ue}) for $f$ solution to a multi-species linearized Boltzmann operator. The proof only relies on the structure of $N(L)$ and $N(L^m)$, which is the same in both multi-species Boltzmann system studied in \cite{DauJueMouZam} and the Landau systems considered in this manuscript.\\

Summarizing, Lemmas \ref{lem.fLf.1}, \ref{lem.fLbf.2} and \ref{lem.ue} imply that for every $f\in D(L)$ 
\begin{align*}
 -(f,Lf)_{L^2_p}\geq \frac{\eta}{2}C_2 C_3\|f - \Pi^L f\|_{\cH}^2 + (\lambda_m - \eta(C_1 + C_2 C_3))\|f^\perp\|_{\cH}^2.
\end{align*}
Choosing $\eta = \min\{1, \lambda_m/(C_1 + C_2 C_3)\}$ we obtain the desired spectral gap with
\begin{align*}
 \lambda = \frac{C_2 C_3}{2}\min\left\{1 , \frac{\lambda_m}{C_1 + C_2 C_3}\right\}.
\end{align*}
This finishes the proof of Theorem ~\ref{thr.spec.gap}.

\section{Exponential decay to global equilibrium}\label{Exponential decay to global equilibrium}
This section is devoted to Theorem ~\ref{thr.conv}. 
The proof relies on the spectral
gap of Theorem ~\ref{thr.spec.gap} and on the hypocoercivity method by Mouhot and Neumann \cite{MouNeu}. 
We have to show that there exists a suitable decomposition of $L$ for which conditions $(i)-(iv)$ in Theorem \ref{MN_Theorem} hold. 
 
We preliminarily observe that $L$ is bounded w.r.t. the $\cH$ norm, that is:
\begin{align}
 |(f,Lg)_{L^2_p}|\leq C\|f\|_\cH\|g\|_\cH , \qquad f,g\in D(L).\label{MN.1}
\end{align}
Relation \eqref{MN.1} can be showed by arguing as in the proof of \eqref{intwo}.

Using formulation \eqref{Lij.b}, the operator $L$ can be rewritten as $L=K-\Lambda$ with:
\begin{align}\label{K.1}
 K_{i}(f) &:= 
 -\frac{1}{\sqrt{M_i}}\sum_{j=1}^N\Div_p\int M_i M_j' A^{(ij)}\left[\frac{p}{m_i}-\frac{p'}{m_j}\right]\na\left( \frac{f_j'}{\sqrt{M_j'}}\right)dp',\\
 \nonumber
 \Lambda_i(f) &:=
 -\frac{1}{\sqrt{M_i}}\sum_{j=1}^N\Div_p\int M_i M_j' A^{(ij)}\left[\frac{p}{m_i}-\frac{p'}{m_j}\right]\na\left( \frac{f_i}{\sqrt{M_i}}\right)dp' .
\end{align}
For the operator $\Lambda$ we will use the following estimates proven by Guo in \cite[{Lemma 3, Lemma 6}]{Guo}: for each $ f\in D(L)$ we have
\begin{align}
  c_1\|f\|_\cH^2\leq (f,\Lambda f)_{L^2_p}\leq & c_2\|f\|_\cH^2, \label{MN.2}\\
  (D_{x}^{\alpha}D_{p}^{\beta} f, D_{x}^{\alpha}D_{p}^{\beta}\Lambda f)_{L^2(\mathbb{T}^3\times\R^3)}\geq &c_3\|D_{x}^{\alpha}D_{p}^{\beta} f\|_{L^2(\mathbb{T}^3, \cH)}^2  - \eta\sum_{\substack{\bar{\beta} \le \beta\\ \bar\beta\neq\beta}} \|D_{x}^{\alpha}D_{p}^{\bar{\beta}} f\|_{L^2(\mathbb{T}^3, \cH)}^2\nonumber \\
& -c(\eta)\|M D_x^\alpha f\|_{L^2(\mathbb{T}^3\times\R^3)}^2, \label{MN.3}
\end{align}
with $\eta>0$ arbitrary and $c(\eta)$ a positive constant depending only on $\eta$. 
We point out that assumption {\em(ii')} in Theorem \ref{MN_Theorem}
follows straightforwardly from \eqref{MN.3} by choosing $\eta<c_3$ and noticing that all the terms containing derivatives of $f$ of order strictly lower than $k$ can be trivially controlled
by the $H^{k-1}$ norm of $f$. In particular $\|M D_x^\alpha f\|_{L^2(\R^3)}^2\leq C \|D_x^\alpha f\|_{L^2(\R^3)}^2\leq C\|f\|_{H^{k-1}(\mathbb{T}^3\times\R^3)}^2$
since $|\beta|>0$ and $|\alpha|+|\beta|\leq k$.

Concerning $K$, we need the following lemma which proves at the same time $(iii)$ and  $(iii')$ of Theorem \ref{MN_Theorem}: 

\begin{lemma}
For every $\delta>0$ there exists a constant $C(\delta)>0$ such that for $|\alpha|+|\beta|\le k$ with $k\ge 4$ and $\beta\ge 1$ :
\begin{align}
 (D_{x}^{\alpha}D_{p}^{\beta} f, D_{x}^{\alpha}D_{p}^{\beta} K f)_{L^2(\T\times\R^{3})}\leq 
 \delta\| D_{x}^{\alpha}D_{p}^{\beta} f\|_{L^2(\T\times\R^{3})}^2 + C(\delta)\|f\|_{H^{k-1}(\T\times\R^{3})}^2.\label{MN.4}
\end{align}
\end{lemma}
\begin{proof}
We first observe that 
\begin{align*}
 \na M_i = -\frac{p}{m_i k_{B}T}M_i.
\end{align*}
Then $K$ can be rewritten as:
\begin{align}\label{K.3}
 K_{i,j}(f) =& 
 -\frac{1}{\sqrt{M_i}}\int \Div_p \left(M_i  A^{(ij)}\left[\frac{p}{m_i}-\frac{p'}{m_j}\right]\right)M_j'\na\left( \frac{f_j'}{\sqrt{M_j'}}\right)dp'  \nonumber\\
  \nonumber
 =& \int \omega^{(ij)}\cdot \sqrt{M_j'}\na\left( \frac{f_j'}{\sqrt{M_j'}}\right)dp' \\
 =&\int \omega^{(ij)}\cdot\left(\na f_j' + f_j'\frac{p'}{2m_j k_{B}T}\right)dp',
\end{align}

with the kernel $\omega^{(ij)}$ defined as:
\begin{align*}
 \omega^{(ij)} := \sqrt{M_i M_j'}\left(
 A^{(ij)}\left[\frac{p}{m_i}-\frac{p'}{m_j}\right]\frac{p}{m_i k_{B}T}
 + \frac{2C^{(ij)}}{m_i}\left|\frac{p}{m_i}-\frac{p'}{m_j}\right|^\gamma \left(\frac{p}{m_i}-\frac{p'}{m_j}\right)
 \right).
\end{align*}
It is useful to estimate $\omega^{(ij)}$ and its Jacobian. 
Since 
$$ 
| A^{(ij)}\left[z\right] v | \le C^{(ij)} |z|^{\gamma+2} {|v|},
$$
 we have 
\begin{align}
 \label{est.om}
 |\omega^{(ij)}| &\leq \sqrt{M_i M_j'} \left( \frac{|p|}{m_i k_{B}T} + \frac{2C^{(ij)}}{m_i}\right)
 \left( \left|\frac{p}{m_i}-\frac{p'}{m_j}\right|^{\gamma+2} + \left|\frac{p}{m_i}-\frac{p'}{m_j}\right|^{\gamma+1} \right).
 \end{align}
Taking into account that the magnitude of the derivative of every element of $A^{(ij)}\left[\frac{p}{m_i}-\frac{p'}{m_j}\right]$ w.r.t. each component of $p$ is bounded by $  C\left|\frac{p}{m_i}-\frac{p'}{m_j}\right|^{\gamma+1}$, and 
\begin{align*}
 \na\sqrt{M_i}= -\sqrt{M_i}\frac{p}{2 m_i k_{B}T}, 
\end{align*}
for some suitable polynomial $q(|p|)$ we have that the Jacobian of $\omega^{(ij)}$ with respect to $p$ can be estimated as 
\begin{align}
 \label{est.naom}
 |\na_p \otimes\omega^{(ij)}| &\leq \sqrt{M_i M_j'}q(|p|)
 \left( \left|\frac{p}{m_i}-\frac{p'}{m_j}\right|^{\gamma} + \left|\frac{p}{m_i}-\frac{p'}{m_j}\right|^{\gamma+2} \right).
\end{align}

Let us now introduce an arbitrary parameter $\eps>0$ and a cutoff function $\psi_\eps : [0,\infty)\to [0,1]$ such that
$\psi_\eps\in C^1([0,\infty))$, $\psi_\eps(x) = 1$ for $0\leq x\leq \eps$, $\psi_\eps(x)=0$ for $x\geq 2\eps$, 
$|\psi'_\eps|\leq C\eps^{-1}\chi_{(0,2\eps)}$. 
Moreover let us define $\Psi_\eps^{(ij)}(p,p') = \psi_\eps\left( \left|\frac{p}{m_i}-\frac{p'}{m_j} \right|\right)$.

We write $K = K^{(I)} + K^{(II)}$, where:
\begin{align*}
 K^{(I)}_i(f) &= \sum_{j=1}^N\int \left(1-\Psi_\eps^{(ij)} \right)\omega^{(ij)}\cdot\left(\na f_j' + f_j'\frac{p'}{2m_j k_{B}T}\right)dp',\\
 K^{(II)}_i(f) &= \sum_{j=1}^N\int \Psi_\eps^{(ij)}\omega^{(ij)}\cdot\left(\na f_j' + f_j'\frac{p'}{2m_j k_{B}T}\right)dp'.
\end{align*}
The function $\omega^{(ij)}$ is smooth in the region $\{|p/m_i-p'/m_j|>2\eps\}$, thus
$$
(1+|p'|)D_{p}^{2\beta}\left( \left(1-\Psi_\eps^{(ij)} \right)\omega^{(ij)} \right)\in L^\infty_{p,p'}.
$$ 
From Young's inequality and the fact that 
\begin{align*}
\|D_v^1 D_{x}^{\alpha}f\|^2_{L^2(\T\times\R^{3})} &\le C(\|D_{x}^{\alpha}D_{p}^{\beta} f\|_{L^2(\T\times\R^{3})}^2 + \|f\|_{L^2(\T\times\R^{3})}^2), \\
\| D_{x}^{\alpha}f\|^2_{L^2(\T\times\R^{3})} &\le  \|f\|_{H^{k-1}(\T\times\R^{3})}^2,
\end{align*}
we get
\begin{align}\label{fKf.I}
 (D_{x}^{\alpha}D_{p}^{\beta} f, &D_{x}^{\alpha}D_{p}^{\beta} K^{(I)} f)_{L^2(\T\times\R^{3})} \\
 \nonumber
 &= \sum_{i,j=1}^N\iiint D_{x}^{\alpha}D_{p}^{\beta} f_i\cdot
 \left(D_{p}^{\beta}\left( \left(1-\Psi_\eps^{(ij)} \right)\omega^{(ij)} \right)\right)
 \left(\na_{p'} D_{x}^{\alpha}f_j' + D_{x}^{\alpha}f_j'\frac{p'}{2m_j k_{B}T}\right)dp dp' dx\\
 \nonumber
 &= (-1)^{|\beta|}\sum_{i,j=1}^N\iiint D_{x}^{\alpha}f_i
 \left(D_{p}^{2\beta}\left( \left(1-\Psi_\eps^{(ij)} \right)\omega^{(ij)} \right)\right)
 \left(\na_{p'} D_{x}^{\alpha} f_j' + D_{x}^{\alpha} f_j'\frac{p'}{2m_j k_{B}T}\right)dp dp' dx\\
 \nonumber
 &\leq C(\eps)\|D_{x}^{\alpha}f\|_{L^2(\T\times\R^{3})}\left( \|\na_{p} D_{x}^{\alpha}f\|_{L^2(\T\times\R^{3})} + \|D_{x}^{\alpha}f\|_{L^2(\T\times\R^{3})} \right)\\
 &\leq \delta\|D_{x}^{\alpha}D_{p}^{\beta} f\|_{L^2(\T\times\R^{3})}^2 + \delta^{-1}C(\eps)\|f\|_{H^{k-1}(\T\times\R^{3})}^2 .\nonumber
\end{align}
We write $\beta = \hat\beta + \xi$ with $|\hat\beta|=1$, $|\xi|=k-1$, so that $D_{p}^{\beta}=D_{p}^{\xi}D_{p}^{\hat\beta}$.
Let us compute the term
\begin{align}\label{Dkappa}
D_{p}^{\beta}K^{(II)}(f) = D_{p}^{\xi}\sum_{j=1}^N\int \Theta^{ij}_{\eps,\hat\beta}[p,p']\cdot\left(\na f_j' + f_j'\frac{p'}{2m_j k_{B}T}\right)dp',
\end{align}
with 
$$ \Theta^{ij}_{\eps,\hat\beta}[p,p'] := D_{p}^{\hat\beta}(\Psi_\eps^{(ij)}\omega^{(ij)})[p,p']. $$
By making the transformation $p'/m_{j}\mapsto p/m_{i} - p'/m_{j}$ inside the integral in \eqref{Dkappa} we obtain
\begin{align}\label{Dkappa.2}
D_{p}^{\beta}K^{(II)}(f) &= D_{p}^{\xi}\sum_{j=1}^N\int \Theta^{ij}_{\eps,\hat\beta}[p,(m_{j}/m_{i})p - p']\cdot
\left(\na f_j\left[\frac{p}{m_{i}} - \frac{p'}{m_{j}}\right]\right.\\ 
&\left. + f_j\left[\frac{p}{m_{i}} - \frac{p'}{m_{j}}\right]\frac{1}{2 k_{B}T}
\left( \frac{p}{m_{i}} - \frac{p'}{m_{j}} \right)
\right)dp'.\nonumber
\end{align}
Let us estimate first the expression
\begin{align*}
&(|p|+|p'|)|\Theta^{ij}_{\eps,\hat\beta}[p,(m_{j}/m_{i})p - p']| = (|p|+|p'|)|D_{p}^{\hat\beta}(\Psi_\eps^{(ij)}\omega^{(ij)})[p,(m_{j}/m_{i})p - p']|\\
&\leq (|p|+|p'|)|D_{p}^{\hat\beta}(\Psi_{\eps}^{(ij)})[p,(m_{j}/m_{i})p - p']| |\omega^{(ij)}[p,(m_{j}/m_{i})p - p']| \\
&\qquad + (|p|+|p'|)|\Psi_{\eps}^{(ij)}[p,(m_{j}/m_{i})p - p']| |D_{p}^{\hat\beta}(\omega^{(ij)})[p,(m_{j}/m_{i})p - p']| .
\end{align*}
By using \eqref{est.om}, \eqref{est.naom} and the properties of the cutoff $\Psi_{\eps}^{(ij)}$ we deduce
\begin{align}\label{est_fin}
& (|p|+|p'|)|\Theta^{ij}_{\eps,\hat\beta}[p,(m_{j}/m_{i})p - p']|\leq C \left( |p'|^{\gamma} + |p'|^{\gamma+1} + |p'|^{\gamma+2} \right)
\chi_{\{|p'|\le 2\eps m_{j}\}}
\end{align}
for some constant $C>0$. Since the local singularities of $\Theta^{ij}_{\eps,\hat\beta}[p,(m_{j}/m_{i})p - p']$ only depend on $p'$ (after the change of variable $p'/m_{j}\mapsto p/m_{i} - p'/m_{j}$),  the estimate in (\ref{est_fin}) 
holds also for the derivatives of $\Theta^{ij}_{\eps,\hat\beta}[p,(m_{j}/m_{i})p - p']$ with respect to $p$, i.e.
\begin{align}\label{awful}
& (|p|+|p'|)|D_{p}^{\xi_{0}}\Theta^{ij}_{\eps,\hat\beta}[p,(m_{j}/m_{i})p - p']|\leq C\phi_{j,\eps}(p')\quad 0\leq\xi_{0}\leq \xi , \\
& \phi_{j,\eps}(p')\equiv \left( |p'|^{\gamma} + |p'|^{\gamma+1} + |p'|^{\gamma+2} \right)
\chi_{\{|p'|\le 2\eps m_{j}\}}.\label{awful.2}
\end{align}
Furthermore, assumption $\gamma\geq -2$ implies
\begin{align}
\|\phi_{j,\eps}\|_{L^{1}(\R^{3})} \leq C(\eps^{\gamma+3}+\eps^{\gamma+4}+\eps^{\gamma+5})\leq C\eps.\label{awful.3}
\end{align}
From \eqref{awful}, \eqref{awful.2} it follows (recall that $K^{(II)}$ does not depend on $x$)
$$ |D^{\alpha}_{x}D^{\beta}_{p}K^{(II)}(f)|\leq C \sum_{0\leq\beta'\leq\beta} \phi_{j,\eps}\ast |D_{x}^{\alpha}D_{p}^{\beta'}f| . $$
As a consequence, thanks to \eqref{awful.3},
\begin{align*}
\| D^{\alpha}_{x}D^{\beta}_{p}K^{(II)}(f)\|_{L^{2}(\T\times\R^{3})} &\leq C\left\| \phi_{j,\eps}\right\|_{L^{1}(\R^{3})} 
\sum_{0\leq\beta'\leq\beta}\left\| D_{x}^{\alpha}D_{p}^{\beta'}f \right\|_{L^{2}(\T\times\R^{3})}\\
&\leq C \eps \sum_{0\leq\beta'\leq\beta}\left\| D_{x}^{\alpha}D_{p}^{\beta'}f \right\|_{L^{2}(\T\times\R^{3})},
\end{align*}
from which it follows
\begin{align}\label{fKf.II}
 (D_{x}^{\alpha}D_{p}^{\beta} f, D_{x}^{\alpha}D_{p}^{\beta} K^{(II)} f)_{L^2(\T\times\R^{3})}\leq 
 C\eps\|f\|_{H^{k}(\T\times\R^{3})}^2 . 
\end{align}
Since $\eps>0$ is arbitrary, from \eqref{fKf.I}, \eqref{fKf.II} the statement \eqref{MN.4} follows. This finishes the proof.

\end{proof}
Relations \eqref{MN.1}--\eqref{MN.4} and the spectral gap  allow us to apply Theorem \ref{MN_Theorem}, which yields (\ref{conv}).\\

We now show the second part of Theorem \ref{thr.conv}. The non-linear terms $\Gamma_i(f,f)$, defined as 
$$ \Gamma_i(f,f) = \frac{1}{\sqrt{M_i}}\sum_{j=1}^{N}Q_{ij}\left( \sqrt{M_i}f_i , \sqrt{M_j}f_j \right): =\sum_{j=1}^{N} {\Theta}_i(f_i,f_j),$$
with 
 \begin{align*}
{\Theta}_i(f_i,f_j)
=&     \Div_p \left( \int A^{(ij)}\left[\frac{p}{m_i}-\frac{p'}{m_j}\right]\sqrt{M_j'}f_j'dp'  \cdot \nabla f_i \right)  \\ 
&-   \Div_p \left( f_i \int A^{(ij)}\left[\frac{p}{m_i}-\frac{p'}{m_j}\right]\sqrt{M_j'}\nabla f_j'dp'   \right) \\
& -  \int A^{(ij)}\left[\frac{p}{m_i}-\frac{p'}{m_j}\right] \frac{p'}{m_j}\sqrt{M_j'} f_j'dp' \cdot \nabla f_i \\
&  +  f_i  \int A^{(ij)}\left[\frac{p}{m_i}-\frac{p'}{m_j}\right] \frac{p'}{m_j}\sqrt{M_j'} \cdot \nabla f_j'dp' .
  \end{align*} 
We now recall an estimate by Guo in \cite[Thr.~3]{Guo} which states that the inner product $ (\Theta_i(f_i,f_j),f_i)_{H^k_{x,p}}$ can be bounded by the $H^k_{x,p}$ and $H^k_{x}\mathcal{H}$ norms of $f_i$ and $f_j$; more precisely 
\begin{align*}
(\Theta_i(f_i,f_j),f_i)_{H^k_{x,p}} \le C \left( \| f_i\|_{H^k_{x,p}} \|f_j\|_{H^k_{x}\mathcal{H}} + \| f_j\|_{H^k_{x,p}} \|f_i\|_{H^k_{x}\mathcal{H}}\right)  \|f_i\|_{H^k_{x}\mathcal{H}}.
\end{align*}
Therefore 
$$(\Gamma_i(f,f),f_i)_{H^k_{x,p}}  \le C \| f_i\|_{H^k_{x,p}}  \|f_i\|_{H^k_{x}\mathcal{H}} \left( \sum_{i=1}^N \|f_j\|_{H^k_{x}\mathcal{H}}\right) +   \|f_i\|^2_{H^k_{x}\mathcal{H}}\left( \sum_{i=1}^N \| f_j\|_{H^k_{x,p}} \right) ,
$$
which implies
\begin{align}\label{mixed_term}
(\Gamma (f,f),f)_{H^k_{x,p}}  := \sum_{i=1}^N (\Gamma_i(f,f),f_i)_{H^k_{x,p}} \le C   \| f\|_{H^k_{x,p}} \|f\|^2_{H^k_{x}\mathcal{H}}.
\end{align}
Define now the function $f:= \frac{F-\mathcal{M}}{\sqrt{\mathcal{M}}}$ with $\mathcal{M}(p) $ and $F$  respectively the unique equilibrium state and the unique smooth solution to (\ref{eq}). The function $f = (f_1, f_2,...,f_N)$ solves
\begin{equation*}
                         \pa_t f_i + \frac{p}{m_i} \cdot \nabla_x f_i =   \sum_{j=1}^N L_{ij}(f_i,f_j) + \Gamma_i(f_i,f_j). %
\end{equation*}
Thanks to Theorem ~\ref{thr.spec.gap} and (\ref{mixed_term}) one can deduce
\begin{align*} 
\frac{1}{2}\pa_t \| f \|_{H^k_{x,p}}^2 \le -\lambda  \| f \|_{H^k_{x}\mathcal{H}}^2 + C  \| f\|_{H^k_{x,p}} \|f\|^2_{H^k_{x}\mathcal{H}}.
\end{align*}
The above differential inequality can be solved by simple iteration method: since $ \| \fin \|_{H^k_{x}\mathcal{H}} \le \varepsilon$, there exists a positive time $T_0$ such that $\| f\|_{H^k_{x,p}} \le 2\varepsilon$ for all $t\in [0,T_0]$. Hence any solution to 
\begin{align*}
\frac{1}{2}\pa_t \| h \|_{H^k_{x,p}}^2 = -\frac{\lambda}{2}  \| h \|_{H^k_{x}\mathcal{H}}^2, \quad \| h_{\textrm{in}} \|_{H^k_{x}\mathcal{H}} = \varepsilon,
\end{align*}
satisfies $  \| f \|_{H^k_{x,p}}^2 \le  \| h \|_{H^k_{x,p}}^2 \le \varepsilon e^{-\lambda/2 t}$ for $t\in [0,T_0]$, taking into account that the $H^k_{x}\mathcal{H}$-norm controls the ${H^k_{x,p}}$-norm. At time $T_0$ we can restart the same process since $ \| f(\cdot, T_0) \|_{H^k_{x}\mathcal{H}} \le \varepsilon$. This finishes the proof of Theorem ~\ref{thr.conv}.

\section{Appendix}


\begin{lemma}\label{lem.K}
 The operator $K : L^2_p\to L^2_p$ defined in \eqref{K.1} is compact.
\end{lemma}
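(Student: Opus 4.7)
The plan is to realize $K$ as an integral operator with a scalar kernel $k_{ij}(p,p')$ and approximate $K$ in the $L^2_p$ operator norm by Hilbert--Schmidt operators. Starting from the representation \eqref{K.3} already derived in Section~\ref{Exponential decay to global equilibrium}, I would integrate by parts in $p'$ to remove the $\na f_j'$ term, obtaining
\begin{align*}
 K_i(f)(p) = \sum_{j=1}^N \int k_{ij}(p,p')\,f_j(p')\,dp',\qquad k_{ij}(p,p') := -\Div_{p'}\omega^{(ij)}(p,p') + \omega^{(ij)}(p,p')\cdot\frac{p'}{m_j k_B T}.
\end{align*}
Since $\omega^{(ij)}$ carries the factor $\sqrt{M_i M_j'}$, both $\omega^{(ij)}$ and its divergence inherit Gaussian decay in $(p,p')$; the sole obstruction to $k_{ij}\in L^2(\R^3\times\R^3)$ is the singularity of $A^{(ij)}[z]$ at $z=p/m_i-p'/m_j=0$.

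To isolate that singularity I would decompose $K=K_\eps^b+K_\eps^s$ using the same cutoff $\Psi_\eps^{(ij)}=\psi_\eps(|p/m_i-p'/m_j|)$ introduced in Section~\ref{Exponential decay to global equilibrium}, with $K_\eps^b$ corresponding to the kernel $(1-\Psi_\eps^{(ij)})k_{ij}$ (supported on $\{|z|\ge\eps\}$) and $K_\eps^s$ to $\Psi_\eps^{(ij)}k_{ij}$ (supported on $\{|z|\le 2\eps\}$). Away from the singular set the matrix $A^{(ij)}[z]$ is smooth with polynomial growth, so $(1-\Psi_\eps^{(ij)})k_{ij}$ is a smooth function of $(p,p')$ bounded by $C(\eps)\sqrt{M_i M_j'}$ times a polynomial in $|p|+|p'|$. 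This is square-integrable on $\R^3\times\R^3$, so $K_\eps^b$ is Hilbert--Schmidt and therefore compact on $L^2_p$ for each $\eps>0$.

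It then suffices to show $\|K_\eps^s\|_{L^2_p\to L^2_p}\to 0$ as $\eps\to 0$. By \eqref{est.om} and \eqref{est.naom},
\begin{align*}
 |k_{ij}(p,p')|\,\Psi_\eps^{(ij)}\le C\sqrt{M_i M_j'}\,q(|p|)\left(\left|\tfrac{p}{m_i}-\tfrac{p'}{m_j}\right|^\gamma+\left|\tfrac{p}{m_i}-\tfrac{p'}{m_j}\right|^{\gamma+2}\right)\chi_{|z|\le 2\eps}
\end{align*}
for some polynomial $q$. Applying Cauchy--Schwarz to the bilinear form $(K_\eps^s f,g)_{L^2_p}$, using that $\sqrt{M_i(p)}\,q(|p|)$ is uniformly bounded on $\R^3$, and changing variables $z=p/m_i-p'/m_j$ in the inner integral, the estimate reduces to
\begin{align*}
 \int_{|z|\le 2\eps}\left(|z|^\gamma+|z|^{\gamma+2}\right)dz\le C(\eps^{\gamma+3}+\eps^{\gamma+5})\le C\eps,
\end{align*}
which is valid precisely because $\gamma\ge -2$. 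This yields $\|K_\eps^s\|_{L^2_p\to L^2_p}\le C\eps$. Since $K=K_\eps^b+K_\eps^s$ with $K_\eps^b$ compact and $\|K-K_\eps^b\|\to 0$, $K$ is a norm limit of compact operators and therefore compact.

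The main technical difficulty is the endpoint case $\gamma=-2$: there $|z|^\gamma=|z|^{-2}$ is only borderline locally integrable in $\R^3$, so the smallness $\||z|^\gamma\chi_{|z|\le 2\eps}\|_{L^1_z}=O(\eps)$ rests delicately on the assumption $\gamma+2\ge 0$. A second care point is that only one derivative can be moved in the integration by parts producing $k_{ij}$, since a further derivative would worsen the near-origin singularity to $|z|^{\gamma-1}$, whose truncated $L^1$ norm no longer vanishes with $\eps$ when $\gamma=-2$; the representation via $\omega^{(ij)}$ in \eqref{K.3} is exactly the one compatible with this single integration by parts.
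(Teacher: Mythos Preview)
Your proof is correct and follows essentially the same route as the paper: write $K$ as an integral operator with kernel $k_{ij}=-\Div_{p'}\omega^{(ij)}+\omega^{(ij)}\cdot p'/(m_j k_B T)$, split off a neighborhood of the singular set $\{p/m_i=p'/m_j\}$, observe that the regular part is Hilbert--Schmidt, and show the singular part has operator norm $O(\eps)$ via the $L^1$ smallness of $(|z|^\gamma+|z|^{\gamma+2})\chi_{|z|\le 2\eps}$ for $\gamma\ge -2$. The only cosmetic differences are that the paper uses the sharp cutoff $\xi_n=\chi_{B(0,1/n)}$ instead of your smooth $\Psi_\eps^{(ij)}$, and it bounds the singular remainder by first dominating the full kernel by a function $W(z)=e^{-\delta|z|^2}(|z|^\gamma+|z|^{\gamma+2})$ of $z=p/m_i-p'/m_j$ alone, then applying Young's convolution inequality rather than your Cauchy--Schwarz argument on the bilinear form; both yield the same $O(\eps)$ bound.
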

\begin{proof}
We will show that $K$ is the limit, in the operator norm, of a sequence of Hilbert-Schmidt operators.
From \eqref{K.3} it follows:
\begin{align*}
 K_i(f) &= \sum_{j=1}^N\int k^{(ij)}(p,p')f_j(p')dp',\qquad
 k^{(ij)}(p,p') = \frac{p'}{m_j k_{B}T}\cdot\omega^{(ij)} - \Div_{p'}\omega^{(ij)} .
\end{align*}
The following estimate is a consequence of \eqref{est.om} and \eqref{est.naom}:
\begin{align*}
 |k^{(ij)}(p,p')|
 &\leq C \left(M_i(p)M_j(p')\right)^{1/4}
 \left( \left|\frac{p}{m_i}-\frac{p'}{m_j}\right|^{\gamma} + \left|\frac{p}{m_i}-\frac{p'}{m_j}\right|^{\gamma+2} \right)\\
 &\leq C W\left( \frac{p}{m_i}-\frac{p'}{m_j} \right),\nonumber\\
 W(z) &\equiv e^{-\delta |z|^2}\left( |z|^{\gamma} + |z|^{\gamma+2} \right),
\end{align*}
for some suitable constant $\delta>0$. 

Let $\xi_n$ be the characteristic function of the ball $B\left(0,\frac{1}{n}\right)$, and let us define the sequence of operators 
$K^{(n)} = (K^{(n)}_1,\ldots,K^{(n)}_N) : L^2_p\to L^2_p$,
\begin{align*}
 K^{(n)}_i(f) &= \sum_{j=1}^N\int k_n^{(ij)}(p,p')f_j(p')dp',\\
 \nonumber
 k_n^{(ij)}(p,p') &= k^{(ij)}(p,p')\left(1-\xi_n\left( \frac{p}{m_i}-\frac{p'}{m_j} \right)\right).
\end{align*}
It is clear that $k_n^{(ij)}\in L^2_{p,p'}$, so $K^{(n)}$ is a Hilbert-Schmidt operator for all $n\in\N$. In particular $K^{(n)}$ is compact.
Let us now estimate:
\begin{align*}
 \left|K_i(f) - K_i^{(n)}(f)\right| &\leq \int |k^{(ij)}(p,p')|\xi_n\left( \frac{p}{m_i}-\frac{p'}{m_j} \right) |f_j(p')|dp'\\
 &\leq\sum_{j=1}^N \left( W\xi_n \right)\ast f_j .
\end{align*}
It follows:
\begin{align*}
 \frac{\|K(f) - K^{(n)}(f)\|_{L^2}}{\|f\|_{L^2}}\leq C\| W\xi_n\|_{L^1}
 = C\int_{\{|z|<1/n\}}e^{-\delta |z|^2}\left( |z|^{\gamma} + |z|^{\gamma+2} \right)dz\leq \frac{C}{n},
\end{align*}
since $\gamma+2\geq 0$. This means that $K^{(n)}\to K$ strongly in $\mathscr{L}(L^2_p)$, which implies that $K$ is compact.
This finishes the proof.
\end{proof}

\end{document}